\newtheorem{lemma}{Lemma}
\newtheorem{proposition}{Proposition}
\begin{document}

%\title{MSE-based Beamforming Design and Mode Selection for RDARS-aided MIMO Systems}
\title{Joint Beamforming Optimization and Mode Selection for RDARS-Aided MIMO Systems}

\author{ {Jintao Wang,   Chengzhi Ma, Shiqi Gong, Xi Yang, Shaodan Ma, \textit{{Senior Member}}, \textit{{IEEE}} }
\thanks{ J. Wang, C. Ma, and S.  Ma are with the State Key Laboratory of Internet of Things for Smart City and the Department of Electrical and Computer Engineering, University of Macau, Macao SAR, China (e-mails: wang.jintao@connect.um.edu.mo; yc07499@um.edu.mo; shaodanma@um.edu.mo). }
\thanks{ S. Gong is with the School of Cyberspace Science and Technology, Beijing Institute of Technology, Beijing 100081, China (e-mail: gsqyx@163.com).}
\thanks{X. Yang is with the Shanghai Key Laboratory of Multidimensional Information Processing, School of Communication and Electronic Engineering, East China Normal University, Shanghai 200241, China (email: xyang@cee.ecnu.edu.cn).}
}
%\author{IEEE Publication Technology,~\IEEEmembership{Staff,~IEEE,} }
        % <-this % stops a space
%\thanks{This paper was produced by the IEEE Publication Technology Group. They are in Piscataway, NJ.}% <-this % stops a space
%\thanks{Manuscript received April 19, 2021; revised August 16, 2021.}

% The paper headers
\markboth{IEEE Transactions on Wireless Communications}%
{Wang \MakeLowercase{\textit{et al.}}: Joint Beamforming Optimization and Mode Selection for RDARS-aided MIMO Systems}

%\IEEEpubid{0000--0000/00\$00.00~\copyright~2021 IEEE}

\maketitle

\IEEEpubid{\begin{minipage}[t]{\textwidth}\ \\[12pt] \centering
  \copyright \ 2024 IEEE. Personal use of this material is permitted. Permission from IEEE must be obtained for all other uses, in any current or future media, including reprinting/republishing this material for advertising or promotional purposes, creating new collective works, for resale or redistribution to servers or lists, or reuse of any copyrighted component of this work in other works.
\end{minipage}} 

\begin{abstract}
Reconfigurable intelligent surface (RIS) has emerged as a cost-effective solution for green communications in 6G. However, its further extensive use has been greatly limited due to its fully passive characteristics. 
%On the other hand, distributed antenna systems (DAS) have shown promise in enhancing system performance in future communications. However, the dense deployment of antennas in DAS leads to high hardware costs and increased energy consumption.
Considering the appealing distribution gains of distributed antenna systems (DAS), a flexible reconfigurable architecture called reconfigurable distributed antenna and reflecting surface (RDARS) is proposed. 
RDARS encompasses DAS and RIS as two special cases and maintains the advantages of distributed antennas while reducing the hardware cost by replacing some active antennas with low-cost passive reflecting surfaces.
%Therefore, it combines the benefits of both distributed antennas and reflecting surfaces, encompassing DAS and RIS as special cases.
In this paper, we present a RDARS-aided uplink multi-user communication system and investigate the system transmission reliability with the newly proposed architecture. 
Specifically, in addition to the distribution gain and the reflection gain provided by the connection and reflection modes, respectively, we also consider the dynamic mode switching of each element which introduces an additional degree of freedom (DoF) and thus results in a selection gain.
%To address such challenges, active RIS and hybrid-relay RIS (HR-RIS) are proposed by introducing additional power amplifiers (PAs) or radio-frequency (RF) chains.The incorporation of additional PAs or RF chains not only fails to meet the ultra-low-power requirements in the development of 6G, but also introduces extra hardware complexity, which poses a challenge to practical implementation.In this paper, we propose a reconfigurable distributed antenna and reflecting surface (RDARS)-aided uplink multi-user communication system and study the performance enhancement in transmission reliability brought by the newly proposed architecture.  The proposed RDARS architecture is a flexible and general combination of distributed antennas and reflecting surfaces, which covers distributed antenna systems (DAS) and RIS as special cases. The flexible and dynamic mode switching of RDARS’s elements provides an extra degree of freedom to maximize the system's performance. 
As such,  we aim to minimize the total sum mean-square-error (MSE) of all data streams by jointly optimizing the receive beamforming matrix, the reflection phase shifts and the channel-aware placement of elements in the connection mode. 
To tackle this nonconvex problem with intractable binary and cardinality constraints, we propose an inexact block coordinate descent (BCD) based penalty dual decomposition (PDD) algorithm with the guaranteed convergence.
Since the PDD algorithm usually suffers from high computational complexity, a low-complexity greedy-search-based alternating optimization (AO) algorithm is developed to yield a semi-closed-form solution with acceptable performance.
Numerical results demonstrate the superiority of the proposed architecture compared to the conventional fully passive RIS or DAS. Furthermore, some insights about the practical implementation of RDARS are provided.  
\end{abstract}

\begin{IEEEkeywords} RIS, HR-RIS, MSE Minimization, Mode Selection, RDARS, Greedy Search, PDD
\end{IEEEkeywords}

\section{Introduction}
% RIS的引入及潜力
As a revolutionary technology aimed at realizing green communication, reconfigurable intelligent surface (RIS) plays a pivotal role in the development of 6G.
Unlike conventional technologies designed to enhance wireless communication, RIS exhibits remarkable capability to dynamically reshape the wireless propagation environment and explore new physical dimensions of transmission.
By deploying a massive number of low-cost passive elements, RIS can adjust the phase of the incident signal to create favorable wireless channels, thereby improving communication performance.

% RIS的相关研究
Due to its high potential for energy and spectrum efficiency, numerous research works have been conducted to explore the benefits of RIS-aided communication systems.
These studies have shown RIS has the ability to address a wide range of challenges, such as improving coverage area, overcoming local dead zones, enhancing edge user rates,  eliminating co-channel interference, and enabling high-precision positioning \cite{Chen2022RISsurvey,2021Basharat}.
For instance, the authors in \cite{ozdogan2020using} optimized RIS to enrich the propagation environment by adding multi-paths and increasing the rank of the channel matrix to support spatial multiplexing, thus leading to substantial capacity gains.
The authors in \cite{CoverageEnhan2022} conducted a performance analysis of the coverage enhancement for RIS under current commercial mobile networks in various urban scenarios. The field trial results showed a significant improvement in the user experience, including extending the coverage and improving the throughput. 
%加入相关优势范例。
However, due to the fully passive nature of RIS's elements, its further extensive use has been greatly limited. 
For example, unless a large number of elements are deployed, the performance improvement brought by RIS is limited due to the ``multiplicative fading" effect, particularly in the presence of a direct link. Moreover, the signaling overhead for channel estimation and the configuration overhead for passive beamforming increase with the number of reflecting elements, thereby restricting its implementation to some extent.
Although some RIS variants, e.g., active RIS \cite{2active2023,Dong2022ARIS}, semi-passive RIS \cite{kang2023active, Peng2023active, Sankar2023active } or hybrid relay-reflecting intelligent surface (HR-RIS) \cite{nguyenHybridRelayReflectingIntelligent2022, Nguyen2023Spectral} have been proposed to address some of the challenges above. The incorporation of extra reflection-type power amplifiers (PAs) or radio-frequency (RF) chains may fail to meet the ultra-low-power requirements in 6G. 

Another promising technology aimed at improving spectral efficiency in 6G networks is distributed antenna system (DAS) \cite{Moerman2022DAS,Yu2020DAS}.
DAS involves the deployment of a large number of spatially distributed antennas or radio access points (RAPs) to provide uniform services to users. Various DAS architectures have been proposed, including cloud radio access network (C-RAN), ultra-dense network (UDN), and cell-free massive multiple-input multiple-output (CF-mMIMO) systems.
For example, CF-mMIMO, where a large number of distributed access points are connected to a central processing unit and serve all users within a network, has a high potential in improving the network performance from different perspectives by integrating emerging technologies such as non-orthogonal multiple access (NOMA) and unmanned aerial vehicle (UAV) \cite{SE2021DASsurvey}. 
However, such dense deployment of a large number of antennas or RAPs leads to significant hardware costs and increased energy consumption.

% RDARS引出
Recently, a novel architecture called reconfigurable distributed antennas and reflecting surfaces (RDARS) has been proposed to maintain the benefits of distributed antennas while reducing the hardware cost by replacing some active antennas with low-cost passive reflecting surfaces\cite{ma2023reconfigurable}.
Specifically, RDARS combines the flexibility of distributed antennas and reconfigurability of passive reflecting surfaces, where each element can be dynamically programmed to two modes: the reflection mode and the connection mode. Elements in the reflection mode function similarly to the conventional passive RIS, while elements in the connection mode act as distributed antennas to directly transmit/receive signals.
The proposed RDARS architecture inherits the low-cost advantage of the conventional RIS while overcoming the multiplicative fading effect. By utilizing the connection mode, RDARS achieves additional distribution gains. Furthermore, elements in the connection mode also facilitate channel estimation of the RDARS-related channels and ease the RDARS configuration with reduced overhead. The RDARS architecture demonstrates significant superiority over conventional RIS and DAS, as highlighted by the theoretical performance analysis and experimental demonstrations \cite{ma2023reconfigurable}. Additionally, a RDARS-aided integrated sensing and communication system (ISAC) prototype was implemented in \cite{wang2023reconfigurable}, showcasing the potential of RDARS in future wireless systems.

%RDARS的可部署性有一些实际的考虑
{\color{black} %There are some practical considerations for RDARS's implementation.First, RDARS can be deployed on the basis of the current distributed antenna systems, such as C-RAN, UDN, CF-mMIMO systems. Such integration and reuse will significantly reduce the cost and difficulty in the deployment. From the perspective of RIS, it is interfaced with the BS through a smart controller to intelligently adjust their phase shifters and make the wireless channels more favorable. The optimization process has to be performed at the BS due to the fully passive property of RIS. Such controlling can be achieved via wired connection, IP routing, wireless connection or autonomous sensing \cite{RISwhitepaper2022}.Although the wireless controls seem more feasible by balancing the deployment flexibility and power consumption, it's quite challenging to achieve high-accuracy beamforming. Besides, wireless controls require accurate time synchronization between the BS and RIS, which is difficult to implement in practice. In addition, the mutual interference betweem wireless access links and wireless controls will further affect the network performance. On the other hand, the distributed antennas or RRUs are connected to a central unit (CU) through optical fronthauls, including baseband-overfiber (BBoF), intermediate-frequency-overfiber (IFoF) and analog radio-frequency-overfiber (RFoF) links. Therefore, the construction of RDARS's control fronthauls can be employed based on the current optical infrastructure, making the implementation of RDARA-aided systems practical. 
There are several advantages when implementing RDARS.
Firstly, RDARS can be deployed by building upon existing distributed antenna systems, such as C-RAN, UDN, and CF-mMIMO systems. Such integration and reuse of the infrastructure significantly reduce the deployment costs and complexities.
From the perspective of RISs, they are interfaced with the base station (BS) through a smart controller. The controller intelligently adjusts the phase shifters of the RIS to optimize wireless channels. Due to the fully passive characteristics, the optimization process must be performed at the BS.
Various methods can be used for controlling RIS, including wired connections, IP routing, wireless connections, or autonomous sensing \cite{RISwhitepaper2022}. While wireless controls offer flexibility in the deployment and power consumption, achieving high-accuracy beamforming through wireless controls is still challenging. Additionally, wireless controls require precise time synchronization between the BS and RIS, which is difficult to implement in practice. Moreover, mutual interference between wireless access links and wireless controls can impact the overall network performance.
On the other hand, the distributed antennas are connected to a central unit through optical front-hauls, including baseband-over-fiber (BBoF), intermediate-frequency-over-fiber (IFoF), and analog radio-frequency-over-fiber (RFoF) links \cite{Yu2020DAS}. The implementation and configuration of RDARS can leverage the existing optical infrastructure, thus enhancing the practicality of RDARS-aided systems.
By considering these practical aspects, the implementation of RDARS becomes feasible and can be aligned with current technological capabilities.

%Hence, it is possible to integrate dual functionalities including control signaling and RF transmission with the same cable/fiber, which makes the appealing structure of RDARS practical. Furthermore,  RDARS can be deployed on the basis of current distributed massive MIMO or cell-free massive MIMO systems because of its flexible combination of DAS and RIS. Such integration and reuse will significantly reduce the cost and difficulty in the deployment. 
}

To the best of our knowledge, the potential of such appealing architecture has not been fully investigated. 
Due to the channel randomness, the location of elements in the connection mode also influences the system performance. 
The RDARS structure offers the potential to achieve selection gains by leveraging the additional degree of freedom (DoF) provided by the dynamically programmed modes for all elements.
Therefore, considering the appealing structure of RDARS with distribution, reflection and selection gains, we study the RDARS-aided multi-user uplink system and investigate the potential of RDARS on transmission reliability.   
In this paper, we aim to minimize the total mean-square-error (MSE) of all data streams by joint beamforming design and channel-aware placement of elements in the connection mode characterized by a binary selection matrix. 
Since the channel-aware placement involves binary variables and unit-modulus constraints, the MSE minimization becomes the mixed-integer programming problem, which is non-deterministic polynomial-time (NP)-hard and lacks efficient solutions.
%{\color{blue} To tackle such a non-convex combinatorial optimization problem, we propose a dual ascent-based joint optimization (DA-JO) algorithm to simultaneously optimize the beamforming and selection variables. Moreover, we also investigate some insightful results under special channel conditions.} 
To tackle such a non-convex combinatorial optimization problem, we propose an inexact block coordinate descent (BCD)-based penalty dual decomposition (PDD) algorithm to alternatively optimize the beamforming and selection variables. Furthermore, a greedy-search-based alternating optimization (AO) algorithm is proposed to reduce the computational complexity with semi-closed-form solutions. Moreover, we discover some insightful results under special channel conditions.
Simulation results demonstrate the superiority of the proposed RDARS-aided system compared to the conventional fully passive RIS-aided and DAS systems. 

The main contributions are summarized as follows:
\begin{itemize}
    \item To our best knowledge, this paper presents the first investigation of a novel RDARS-aided system that incorporates distribution gains, reflection gains, and selection gains. 
    We propose a RDARS-aided uplink multi-user communication system and explore the performance improvement in transmission reliability brought by the newly proposed architecture. 
    \item We consider the flexible and programmable configuration of the RDARS's elements, where each element can be dynamically switched between the connection and reflection modes as needed. 
    The additional DoF for dynamic mode switching is investigated to enhance the system performance.
    To improve the system transmission reliability, the total MSE of all data streams is minimized by jointly optimizing the receive beamforming matrix, the reflection phase shifts, and the channel-aware placement of elements in the connection mode.
    \item To tackle the intractable binary and cardinality constraints, we first propose an inexact BCD-based PDD algorithm to alternatively optimize the beamforming vector and the mode selection matrix. Furthermore, in order to reduce the computational complexity, we approximate the original objective function and propose a low-complexity greedy-search-based AO algorithm, which yields a semi-closed-form solution with acceptable performance.  Moreover, we present some insightful results in the special cases of the RDARS-aided single-user scenario.
    \item The performance comparison between the RDARS and the conventional fully passive RIS or DAS systems is conducted through simulations. The results demonstrate the superiority of the proposed architecture. It also shows the necessity of optimizing the location of elements in the connection mode according to the channel conditions.  We further analyze the impact of the number of elements in the connection mode, thereby offering valuable insights for the practical implementation of RDARS. 
    %\item {\color{blue} To tackle the non-convex binary and cardinality constraints, we propose a Lagrangian dual ascent-based joint optimization method to optimize the beamforming vector and selection variable at the same time. Specifically, since these two variables are tightly coupled in the objective function, we treat them as a new variable whose amplitude denotes the selection of the connected elements and whose phase represents the reflecting coefficients. Through the block coordinate descent method, each element of the new variable is globally optimized with a semi-closed-form solution.  }
    %\item {\color{blue} To further reduce the computational complexity, we first approximate the original objective and then present a greedy search-based alternating optimization algorithm.}  
\end{itemize}

The remainder of this paper is organized as follows.  Section II introduces the RDARS's architecture, system model, and problem formulation. The proposed inexact BCD-based PDD algorithm is presented in Section III. Section IV provides a low-complexity AO algorithm. Numerical results are shown in Section V. Finally, Section VI concludes this paper. 

    \begin{figure}[t]
        \centering  %图片全局居中
     \includegraphics[width=0.45\textwidth]{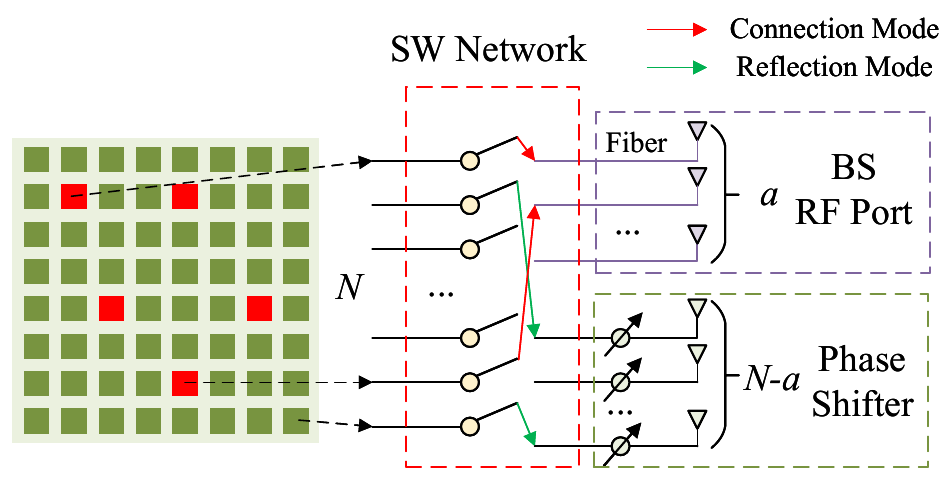}
        \caption{Illustration of the dynamic mode switching of RDARS.}
        \vspace{-6pt}
        \label{SW Network}    
    \end{figure}
    
% Notation: 
Notation: 
Through this paper, scalars, vectors, and matrices are denoted by lowercase, bold-face lowercase, and bold-face uppercase letters, respectively.  
The notation $\mathbb{C}^n$ denotes the $n$-dimensional space of the complex number.
$(\cdot)^T$ and $(\cdot)^H$ represent the transpose and conjugate transpose of a complex-valued vector or matrix, respectively.  
${\bf{I}}_{N}$ denotes the identity matrix of size $N \times N$. 
The expectation operator is denoted by $\mathbb{E}[\cdot]$. 
${\rm diag}({\bf{a}})$ indicates a square diagonal matrix whose diagonal elements consist of a vector ${\bf{a}}$, and ${\rm diag}({\bf{A}})$ returns a vector consisting of the diagonal elements of matrix ${\bf{A}}$. 
${\rm \widetilde{diag}}({\bf{A}})$ represents a square diagonal matrix whose diagonal elements are the same as those of matrix ${\bf{A}}$.
${\bf{B}} = {\rm blkdiag}({\bf{A}}_1,...,{\bf{A}}_N)$ returns the block diagonal matrix created by aligning the input matrices ${\bf{A}}_1$,...,${\bf{A}}_N$ along the diagonal of ${\bf{B}}$.
$|a|$ and $\angle a$ refer to the modulus and phase of a complex input $a$, respectively. 
$||a||_0$ denotes the zero norm of vector $a$.
 $ \mathcal{CN}({\bf{0}},{\bf{A}})$ means the distribution of a circularly symmetric complex Gaussian (CSCG) random vector with zero mean vector and covariance matrix ${\bf{A}}$ and $\sim$ denotes ``distributed as''.
 $\Re\{\cdot \}$ returns the real part of the complex input. 
 
\section{System Model and Problem Formulation}

\subsection{RDARS Architecture}    
A RDARS composed of $N$ elements is employed, with $a$ elements connected to the BS and operating in the ``connection mode", while the remaining $N\!-\!a$ passive elements operate in the ``reflection mode" to reflect the incident signal to favorable directions.  
%It offers an extra degree of freedom whereby each element can be dynamically switched between the connection mode and reflection mode to maximize the system's performance. 
Since each element can be dynamically switched between the connection and reflection modes, it offers an extra DoF to achieve selection gains.
Such transformation can be realized via a switching network (SW) as illustrated in Fig.~\ref{SW Network} \footnote{ \color{black}
Compared to conventional DAS and RIS, the additional hardware in the RDARS is the SW, which includes RF switches designed for dynamic mode switching. As we know, RF switches have been utilized in antenna selection techniques within the domain of hybrid beamforming for massive MIMO communications. They serve to reduce hardware costs by substituting for analog phase shifters or RF chains \cite{sanayei2004antenna,Gao2018Hybrid,elbir2019joint}. Analog RF switches offer significantly lower complexity and power consumption, which enhances the appeal of antenna selection \cite{Ant-SW-R1,Ant-SW-R2}. 
Although the cable connections and mode switches involved in the hardware implementation of RDARS will incur an acceptable hardware cost, the performance improvement makes the RDARS structure particularly appealing.}.
%The illustration of the dynamic mode switching of each element is shown in Fig.~\ref{SW Network}.
Flexible configurations are available for the number and locations of elements in the connection mode.
In this paper, we assume the number of connected elements, i.e., $a$, is predefined.

In comparison to the traditional fully passive RIS, the RDARS architecture inherits low-cost advantages from elements in the reflection mode. On the other hand, it overcomes the multiplicative fading effect from those in the connection mode. 
%The research findings presented in \cite{ma2023RDARS} demonstrate that a small number of connected elements in RDARS can lead to a substantial improvement in system performance. 
Since a small number of elements in the connection mode can lead to a substantial performance improvement, we assume the number of connected elements is much smaller than the total number of elements on RDARS, i.e., $a<<N$.

Denoted by $\mathcal{A}$ the index set of the connected components, the mode selection matrix can be represented by the diagonal matrix ${\bf{A}}\in \mathbb{C}^{N \times N}$, whose diagonal value equal to 1 for the connection mode and 0 for the reflection mode. 
The reflection coefficients of all elements operating in the reflection mode can be denoted as $ {\bm{\theta}}\in \mathbb{C}^{N \times 1}$ with ${\bf{\Phi}}\triangleq{\rm diag}({\bm{\theta}})$, while $({\bf{I}}\!-\!{\bf{A}}){\bm{\theta} }$ stands for the realistic phase shifts of elements in the reflection mode. 
 
    \begin{figure}[t]
        \centering  %图片全局居中
     \includegraphics[width=0.45\textwidth]{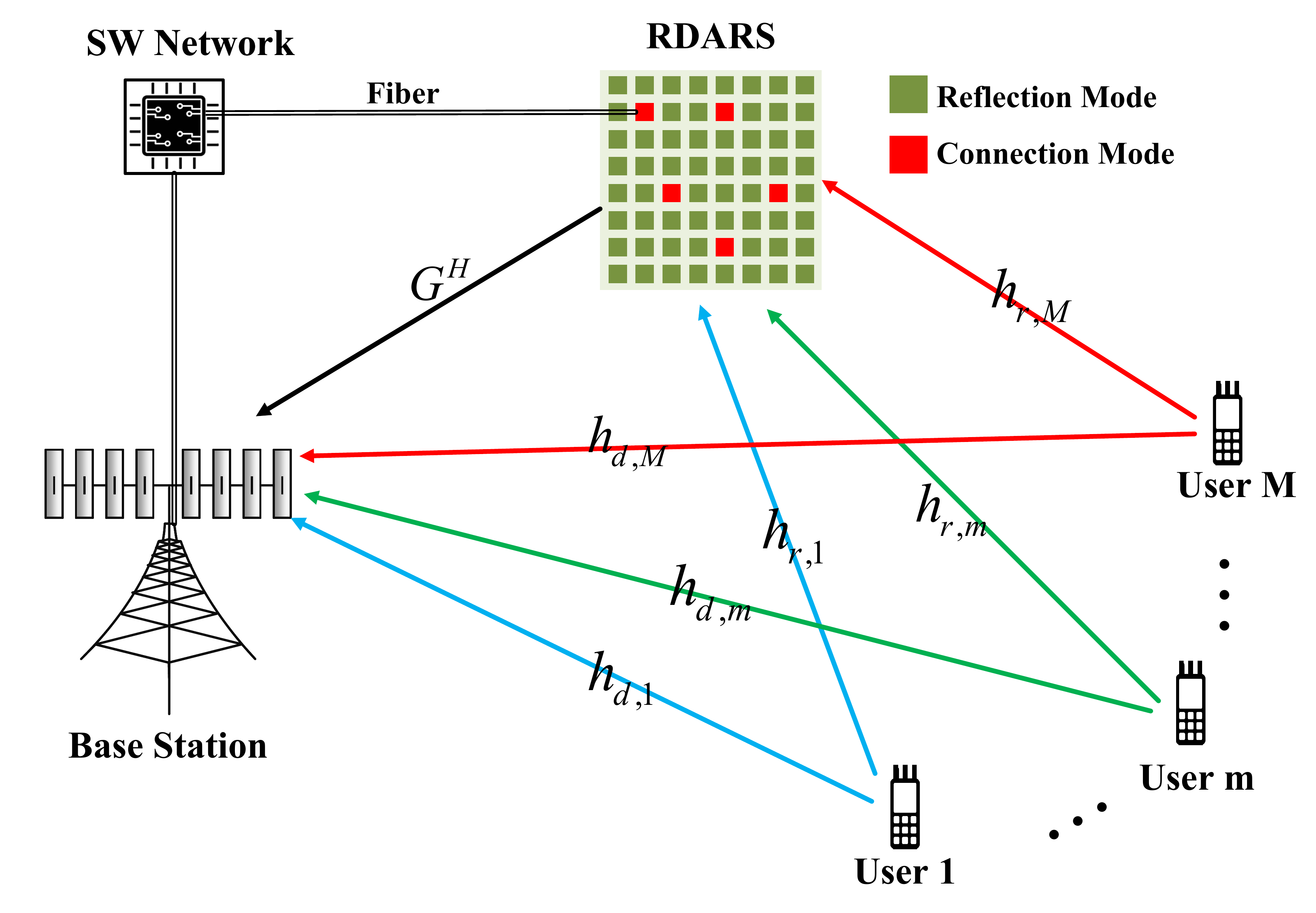}
        \caption{A RDARS-aided uplink MIMO communication system.}
        \vspace{-6pt}
        \label{system model}    
    \end{figure} 
    
\subsection{System Model}
As shown in Fig.~\ref{system model}, a RDARS-aided uplink multi-user communication system with $M$ single-antenna users and a BS equipped with $N_r$ antennas is considered. %\footnote{In this paper, we assume that the hardware components are ideally implemented and the front-haul signal transmission offers the error-free and infinite capacity to characterize the performance upper bound. The practical consideration of the hardware distortions is left for our future work.}. 
Considering the quasi-static far-field channel environment,  the channel from $M$ users to the BS and RDARS can be denoted as ${\bf{H}}_{d} \!\in\!\mathbb{C}^{N_r \times M}$ and ${\bf{H}}_{r} \!\! \in\!\! \mathbb{C}^{N \times M}$, respectively, where
${\bf{H}}_{d}=\big[ {\bf{h}}_{d,1} ~  {\bf{h}}_{d,2} \dots {\bf{h}}_{d,M} \big]$ and ${\bf{H}}_{r}=\big[ {\bf{h}}_{r,1} ~  {\bf{h}}_{r,2} \dots {\bf{h}}_{r,M} \big]$.
In addition, the channel between the BS and the RDARS is represented as ${\bf{G}} \in \mathbb{C}^{N \times N_r}$. 
All channel state information (CSI) is assumed to be known perfectly to explore the performance limit\footnote{Since the elements on RDARS can be dynamically programmed to the connection mode, the channel related to the connected elements can be easily estimated via the pilot training. Recalling the channel estimation methods in asymmetrical uplink and downlink massive MIMO systems \cite{Yang2023antselect}, the full-dimensional CSI related to the RDARS can be efficiently recovered.}.
Denoted by ${\bf{s}}$ the transmitted symbols from $M$ users with $\mathbb{E}[{\bf{s}} {\bf{s}}^H]={\bf{I}}$, the uplink received signal ${\bf{y}}\in \mathbb{C}^{(N_r+a) \times 1}$ at the BS can be divided into two parts, i.e., ${\bf{y}}^{b}\in \mathbb{C}^{N_r \times 1}$ and ${\bf{y}}^{c}\in \mathbb{C}^{a \times 1}$, defined as follows\footnote{We assume the ideal inter-element isolation of the practical arrays on RDARS, where the received signal originating from the remaining elements in the reflection and connection modes can be ignored \cite{2active2023,Hu2021semiRIS}. }
\begin{comment}
\begin{align}
   {\bf{y}}_{b} \!&=\! \sum\limits_{m=1}^M \left({\bf{h}}_{d,m} \!+\!{\bf{G}}^H ({\bf{I}}\!-\!{\bf{A}}) {\bm{\Phi}} {\bf{h}}_{r,m} \right)\sqrt{p_m} s_m \!+\! {\bf{n}}_b,  \nonumber \\
   {\bf{y}}_{c} \!&=\! \sum\limits_{m=1}^M {\bf{A}}_a {\bf{h}}_{r,m} \sqrt{p_m} s_m +{\bf{n}}_c,
\end{align}
\end{comment}

\vspace{-4pt}
\begin{equation}
  \underbrace{\begin{bmatrix} {\bf{y}}^{b} \\ {\bf{y}}^{c} \end{bmatrix}}_{{\bf{y}}}
  = 
  \underbrace{\begin{bmatrix} {\bf{H}}_{d} \!+\!{\bf{G}}^H ({\bf{I}}\!-\!{\bf{A}}) {\bm{\Phi}} {\bf{H}}_{r} \\  {\bf{A}}_a {\bf{H}}_{r}  \end{bmatrix}   }_{ {\bf{ H}} } {\bf{P}}{\bf{s}} + \underbrace{\begin{bmatrix} {\bf{n}}_{b} \\  {\bf{n}}_{c} \end{bmatrix} }_{{\bf{n}}},
\end{equation} 
where ${\bf{y}}^{b}$ and ${\bf{y}}^{c}$ represent the signals received by the BS's antennas and the RDARS's elements operating in the connection mode, respectively.  
The effective channel ${\bf{H}}$ is composed of two parts, i.e., ${\bf{H}}_b$ and ${\bf{H}}_c$. 
First, ${\bf{H}}_b$ represents the effective channel between users and the BS assisted with the RDARS's elements in the reflection mode, similar to that of the conventional RIS-aided system, defined as ${\bf{H}}_b={\bf{H}}_{d}+{\bf{G}}^H ({\bf{I}}\!-\!{\bf{A}}) {\bf{\Phi}} {\bf{H}}_{r}$.
On the other hand, ${\bf{ H}}_c$ denotes the effective channel between the users and the elements working in the connection mode, i.e., ${\bf{H}}_c={\bf{A}}_a  {\bf{H}}_{r}$, where ${\bf{A}}_a \in \mathbb{C}^{a \times N}$ stands for the channel index matrix of those components connected to the BS. Specifically, ${\bf{A}}_a$ is a submatrix of ${\bf{A}}$ consisting of non-zero rows of the mode selection matrix ${\bf{A}}$.
Here, ${\bf{P}}$ denotes the transmit power matrix for $M$ users, i.e., ${\bf{P}}={\rm diag}([\sqrt{p_1}, \sqrt{p_2}, \dots, \sqrt{p_M}])$, with $p_m$ representing the transmit power for the $m$-th user. 
 ${\bf{n}}_b$ and ${\bf{n}}_c$ represent the additive white Gaussian noise (AWGN) obeying the CSCG distributions with zero mean and covariance matrices ${\bm{\Lambda}}_b \!=\! {\sigma_b^2}{\bf{I}}_{N_r}$ and ${\bm{\Lambda}}_c \!=\!{\sigma_c^2}{\bf{I}}_{a}$, i.e., ${\bf{n}}_b \sim \mathcal{CN}({\bf{0}},{\sigma_b^2}{\bf{I}}_{N_r})$ and ${\bf{n}}_c \sim \mathcal{CN}({\bf{0}},{\sigma_c^2}{\bf{I}}_{a})$, respectively. 
${\bm{\Lambda}}={\rm blkdiag} { ({\bm{\Lambda}}_b , {\bm{\Lambda}}_c) }$ is the covariance matrix of the effective noise vector ${\bf{n}}$.
{\color{black}For simplicity, we assume ${\sigma_b^2}={\sigma_c^2}={\sigma^2}$. Thus, the covariance matrix of the effective noise vector turns into ${\bm{\Lambda}}={\sigma^2}{\bf{I}}$.}

\begin{comment}
\begin{equation}\label{BS_signal_1}
  {\bf{y}}=
  \begin{bmatrix} {\bf{y}}_{B} \\ {\bf{y}}_{R} \end{bmatrix}
  =\underbrace{\begin{bmatrix} {\bf{H}}_b \\  {\bf{ H}}_R \end{bmatrix} {\bf{P}}{\bf{s}}}_{ {\bf{ H}} } + \underbrace{\begin{bmatrix} {\bf{n}}_{B} \\  {\bf{n}}_{R} \end{bmatrix} }_{{\bf{n}}},
\end{equation}     
\end{comment}

%{\bf{W}}=\big[{\bf{W}}_b ~ {\bf{W}}_c \big]

At the BS side, the receive beamforming matrix ${\bf{W}}\in\mathbb{C}^{M \times (N_r+a)} $ is applied to estimate the data symbol vector ${\bf{\hat s}}={\bf{W}} {\bf{y}} $. The resultant MSE of individual data streams of all users can be denoted as diagonal elements of the average MSE matrix derived in \eqref{Ori_MSE} at the top of this page, 
\begin{figure*}
\vspace{-12pt}
\begin{small}
\begin{align} \label{Ori_MSE}
  {{\rm {\bf MSE}}}
  = & \mathbb{E}[({\bf{\hat s}}-{\bf{s}})({\bf{\hat s}}-{\bf{s}})^H] 
  %=  ({\bf{W}}{\bf{H}}{\bf{P}}-{\bf{I}})({\bf{W}}{\bf{H}}{\bf{P}}-{\bf{I}})^H+ {\bf{W}}{{\bm{\Lambda}}}{\bf{W}}^H \nonumber \\
  =  \underbrace{{\bf{W}}_b ( {\bf{H}}_b {\bf{P}} {\bf{P}}^H{\bf{H}}_b^H \!+\!  \sigma_b^2 {\bf{I}}_{N_r}){\bf{W}}_b^H \!-\! {\bf{W}}_b {\bf{H}}_b {\bf{P}} \!-\! {\bf{P}}^H {\bf{H}}_b^H {\bf{W}}_b^H \!+\! {\bf{I}}_M }_{{\bf{E}}^{\rm RIS}}   \\
  &+ \underbrace{ {\bf{W}}_c ( {\bf{H}}_c {\bf{P}} {\bf{P}}^H{\bf{H}}_c^H \!+\!  \sigma_c^2 {\bf{I}}_{a}){\bf{W}}_c^H \!-\! {\bf{W}}_c {\bf{H}}_c {\bf{P}} - {\bf{P}}^H {\bf{H}}_c^H {\bf{W}}_c^H \!+\! {\bf{I}}_M }_{{\bf{E}}^{\rm DAS}} \nonumber  + \underbrace{{\bf{W}}_b  {\bf{H}}_b {\bf{P}} {\bf{P}}^H {\bf{H}}_c^H {\bf{W}}_c^H + {\bf{W}}_c  {\bf{H}}_c {\bf{P}} {\bf{P}}^H {\bf{H}}_b^H {\bf{W}}_b^H \!-\! {\bf{I}}_M }_{{\bf{E}}^{\rm cross}} ,  
\end{align}
\end{small}
\hrulefill
\end{figure*}
where ${\bf{W}}_b\in\mathcal{C}^{M \times N_r}$ and ${\bf{W}}_c \in \mathcal{C}^{M \times a}$ denote the linear equalizers related to the received signals ${\bf{y}}_b$ and ${\bf{y}}_c$, respectively.

%%想要看出MSE问题与经典RIS的不同，需要进一步对MSE matrix进行化简，或者重新表示成不同的形式。
% E^{cross}是重点。

As seen from the MSE expression, the average MSE is divided into three sections, i.e., ${{\bf{E}}^{\rm RIS}}$, ${{\bf{E}}^{\rm DAS}}$ and ${{\bf{E}}^{\rm cross}}$. Obviously, ${{\bf{E}}^{\rm RIS}}$ represents the MSE of all data streams induced by $N-a$ passive elements in the reflection mode. 
${{\bf{E}}^{\rm DAS}}$ stands for the MSE generated by an $a$-element distributed antenna array.  
As for the last term ${{\bf{E}}^{\rm cross}}$, it can be regarded as the mutual impact brought by the combination of RIS and distributed antennas at the RDARS-aided system.  
In particular, the system reduces to the conventional fully passive RIS-aided system with $N$ reflecting elements when $a$ equals zero, or ${\bf{A}}={\bf{0}}$. On the other hand,  it becomes identical to the DAS system with $N\!+\! N_r$ distributed antennas when all of the RDARS elements are connected to the BS.
\begin{comment}  
Then from the perspective of resource allocation, we have the following remarks:
\begin{enumerate}
\item   First, ignoring the MSE induced by the mutual impact of individual RIS and DAS, i.e., ${{\bf{E}}^{\rm cross}}$,  our goal is to minimize the average sum MSE of ${{\bf{E}}^{\rm RAS}}$ and ${{\bf{E}}^{\rm RIS}}$ at the same time. However, since the effective channels at RDARS associated with the passive elements or elements connected to the BS are determined by the index matrix ${\bf{A}}$, which decides the positions of the connected elements, they will have a conflict with each other for the distributions of RDARS elements. 
\item Because of the existence of mutual impact on the MSE performance, the RDARS-aided system can not be regarded as the simple superposition of two individual systems of RIS and DAS, even when the number and positions of the connected RDARS elements are fixed.
\end{enumerate}
\end{comment}

\vspace{-4pt}
\subsection{Problem Formulation}
{\color{black}The average sum MSE minimization problem for the RDARS-aided MIMO system can be formulated as\footnote{\color{black}
The proposed algorithms in this paper can be easily extended to the general case with the inequality constraint $||{\rm diag}({\bf{A}})||_0 \leq a$. 
 Additionally, we conclude that this general problem, subject to the constraint $||{\rm diag}({\bf{A}})||_0 \leq a$, is equivalent to the problem with the equality constraint $||{\rm diag}({\bf{A}})||_0 = a$.
}}
\begin{subequations}\label{Problem P1}
  \begin{align}
  (\text{P1}): \ \min_{{\bf{A}},{\bf{W}},{\bf{\Phi}}} \quad  \!\!&  f_{\rm MSE}^{{\bf{A}},{\bf{W}},{\bf{\Phi}}}={\rm Tr}({{\rm {\bf MSE}}})  \\
 \!\!\mbox{s.t.} \quad 
       & | {\bf{\Phi}} |_{n,n}=1, ~\forall n \in \mathcal{N},  \label{P1_RIS}  \\
       & [{\bf{A}}]_{n,n} \in \{0,1\}, \forall n \in \mathcal{N},  \label{P1_index} \\
       & [{\bf{A}}]_{i,j}=0, \forall i,j \in \mathcal{N}, i \neq j, \label{P1_index_diag} \\
       & ||{\rm diag}({\bf{A}})||_0=a. \label{P1_index_number}
  \end{align} 
 \end{subequations}
 where \eqref{P1_RIS} denotes the unit-modulus constraints for the passive reflection coefficients at the RADAS, and \eqref{P1_index}\eqref{P1_index_diag} imply the mode selection matrix ${\bf{A}}$ is diagonal and has a diagonal value of 0 or 1, indicating whether the RDARS element is connected to the BS.
 The cardinality constraint \eqref{P1_index_number} limits the number of elements in the connection mode to $a$. 
 %Unfortunately, problem P1 is NP-hard with the non-tractable constraints. 
 Unfortunately, problem (P1) is essentially a mixed-integer optimization problem, which is intractable and NP-hard.
 %In general, it is more challenging due to the cardinality and binary constraints than the conventional RIS-aided system design with only unit modulus constraints. 
 Due to the cardinality and binary constraints, problem (P1) is more challenging than the conventional fully passive RIS-aided system design.
 To the best of our knowledge, there have been very few studies considering dynamically configuring the RIS elements to be active or not. Additionally, existing algorithms are also inapplicable to our works. 
 In the following section, we propose two different algorithms to overcome the challenges arising from unit-modulus, binary, and cardinality constraints.
 %propose an efficient inexact BCD-based PDD optimization algorithm to tackle the non-tractability arising from the unit-modulus, binary, and cardinality constraints. Besides, to further reduce the computational complexity, a greedy search-based AO is also proposed.

\section{Proposed Inexact BCD-PDD Algorithm}
In this section, we propose an efficient inexact BCD-based PDD algorithm to solve the non-convex optimization problem (P1) with the guaranteed convergence. 
%In this section, the PDD algorithm is proposed to tackle the non-convex optimization problem, where the inner loop is solved with the inexact BCD-based method. The proposed PDD algorithm can efficiently obtain a near-optimal solution with guaranteed convergence.

{\color{black}
Since the objective function $f_{\rm MSE}^{{\bf{A}},{\bf{W}},{\bf{\Phi}}}$ is convex w.r.t. the receive beamforming matrix $\bf{W}$. The optimal solution can be obtained with closed form by taking the first derivative to 0, i.e., 
 \begin{equation} \label{W_opt}
 {\bf{W}}^{\star}={\bf{P}}^H{\bf{ H}}^{H} \left({\bf{ H}} {\bf{P}}{\bf{P}}^H {\bf{ H}}^{H} \!+\!{\bm{\Lambda}} \right)^{-1}.
 \end{equation}
Substituting the optimal receive beamforming matrix $\bf{W}^{\star}$ into the objective function in problem (P1), the newly formulated objective function with two optimization variables is derived as
\begin{align}\label{obj_mse}
  f_{\rm MSE}^{{\bf{A}},{\bf{\Phi}}}
  =&   {\rm Tr} \left\{ ({\bf{I}}_M  \!+\! \sigma_b^{-2}{\bf{R}}_{b}({\bf{A}},{\bf{\Phi}}) \!+\! \sigma_c^{-2}{\bf{R}}_{c}({\bf{A}})   )^{-1} \right\},  
\end{align} 
where $ {\bf{R}}_{b}({\bf{A}},{\bf{\Phi}})={{\bf{P}}^H {\bf{ H}}_b^{H} {\bf{ H}}_b {\bf{P}}}$ and $ {\bf{R}}_{c}({\bf{A}})={{\bf{P}}^H {\bf{ H}}_c^H {\bf{ H}}_c  {\bf{P}} } $    
denotes the received signal covariance matrices related to the effective channels ${\bf{ H}}_b$ and ${\bf{ H}}_c$, respectively. 
}

The sum-MSE minimization problem for jointly optimizing the mode selection matrix and the RDARS's phase shifts is then formulated as:
\begin{subequations}\label{P2}
  \begin{align}
  (\text{P2}): \ \min_{{\bf{A}},{\bf{\Phi}}} \quad  \!\!&  f_{\rm MSE}^{{\bf{A}},{\bf{\Phi}}}  \\
 \!\!\mbox{s.t.} \quad 
       & \eqref{P1_RIS},\eqref{P1_index},\eqref{P1_index_diag},    {\rm Tr} \{  {\bf{A}}   \} = a.  \label{P2_C1}
  \end{align} 
 \end{subequations}
The constraints are generally intractable. 
Moreover, the objective function in problem (P2) encompasses the matrix inversion of a matrix-valued quadratic function, which renders the problem intractable and NP-hard with regard to the binary matrix variable ${\bf{A}}$.
%Even with fixed passive RIS coefficients, the problem of index optimization is a combinatorial optimization problem. Essentially, the problem aims to find an optimal subset of RDARS's elements to minimize the average MSE of all data streams. Intuitively, the global optimal solution can be found by the exhaustive search method. However, the number of potential subsets is $\binom{N}{a}=\frac{N!}{(N-a)!}$, thus leading the computational complexity of the exhaustive search method to increase exponentially.This indicates that the exhaustive search method is not feasible for large values of $N$ and $a$. 
Even when the RDARS's phase shifts are fixed, the subproblem of the mode selection remains a challenging combinatorial optimization problem. While the exhaustive search method can provide the globally optimal solution, its computational complexity grows exponentially with the number of RDRAS elements.
Consequently, the exhaustive search method becomes infeasible for the large-scale RDARS-aided systems.

In the sequel, we begin by providing an equivalent transformation and introducing an auxiliary variable ${\bf{\bar A}}$ to enhance the tractability of the problem.
%A PDD-based iterative algorithm is then proposed to address such issues, where the primal variables are updated with the inexact BCD-based method and the dual variables are updated adaptively between the augmented lagrangian (AL) and penalty methods. 
Then, a PDD-based iterative algorithm is proposed to address such issues, where the inner iteration updates the primal variables by inexactly solving its corresponding augmented Lagrangian (AL) problem via the BCD method, while the outer iteration updates the dual variables and the penalty parameter.

\subsection{ Equivalent Transformation }
%bianry constraint不容易解决，不容易在哪里。
%The binary optimization is generally NP-hard. The simplest way is to relax the binary constraints with the continuous ones and project the continuous solution into the nearest integer. However, such relaxation will lead to significant performance loss. Another type of binary optimization with the optimal solution is the branch-and-bound (BB) method.  Unfortunately, the worst-case computational complexity will increase exponentially as the exhaustive search method. 
Binary optimization problems are known to be generally NP-hard, posing significant challenges in finding optimal solutions. One common approach to tackle the binary optimization problem is to relax the binary constraints and treat the problem as a continuous optimization one, and then project the continuous solution onto the nearest integer values. However, such relaxation technique often results in a notable loss of performance.
Another type of binary optimization method that yields optimal solutions is the branch-and-bound (BB) method. However, the worst-case computational complexity of the BB method increases exponentially, similar to the exhaustive search method. As the number of RDARS's elements grows, the computational time required to find the optimal solution may become prohibitively high.

To balance the complexity and optimality, we propose a variational reformulation of the binary constraints. The constraints can be equivalently transformed into the $l_2$ box non-separable constraints based on the following lemma. 

\begin{lemma} \label{binary}
Assume ${\bf{x}} \in \mathbb{R}^N,{\bf{v}} \in \mathbb{R}^N$ and define $\chi \triangleq \{ ({\bf{x}},{\bf{v}}) | {\bf{0}} \leq {\bf{x}} \leq {\bf{1}}, ||2{\bf{v}}\!-\!{\bf{1}}||_2^2 \leq N, \left<2{\bf{x}}\!-\!{\bf{1}},2{\bf{v}}\!-\!{\bf{1}}\right>=N, \forall {\bf{v}} \}$. Assume that ${\bf{x}},{\bf{v}} \in \chi$, then we have ${\bf{x}} \in \{0,1\}^N$, and ${\bf{x}}={\bf{v}}$.
\end{lemma}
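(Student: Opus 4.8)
The plan is to reduce the three defining constraints of $\chi$ to a single tight Cauchy--Schwarz inequality and then read off both conclusions from its equality conditions. First I would change variables by setting ${\bf{p}} \triangleq 2{\bf{x}}-{\bf{1}}$ and ${\bf{q}} \triangleq 2{\bf{v}}-{\bf{1}}$. Under this substitution the box constraint ${\bf{0}} \leq {\bf{x}} \leq {\bf{1}}$ becomes $|p_n| \leq 1$ for every $n$, which immediately yields $||{\bf{p}}||_2^2 = \sum_n p_n^2 \leq N$; the constraint $||2{\bf{v}}-{\bf{1}}||_2^2 \leq N$ is exactly $||{\bf{q}}||_2^2 \leq N$; and the coupling constraint reads $\left< {\bf{p}},{\bf{q}} \right> = N$.

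With these three facts in hand, the key step is the chain
$$ N = \left< {\bf{p}},{\bf{q}} \right> \leq ||{\bf{p}}||_2 \, ||{\bf{q}}||_2 \leq \sqrt{N}\,\sqrt{N} = N, $$
where the first inequality is Cauchy--Schwarz and the second uses $||{\bf{p}}||_2 \leq \sqrt{N}$ together with $||{\bf{q}}||_2 \leq \sqrt{N}$. Since the two ends coincide, every inequality must hold with equality; in particular $||{\bf{p}}||_2^2 = N$ and ${\bf{p}},{\bf{q}}$ are linearly dependent.

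From $||{\bf{p}}||_2^2 = \sum_n p_n^2 = N$ together with $p_n^2 \leq 1$ for each $n$, I would argue that every summand must saturate, i.e. $p_n^2 = 1$, hence $2x_n-1 = \pm 1$ and $x_n \in \{0,1\}$; this establishes ${\bf{x}} \in \{0,1\}^N$. For the second conclusion, linear dependence lets me write ${\bf{q}} = \lambda {\bf{p}}$; substituting into the coupling constraint gives $N = \left< {\bf{p}},{\bf{q}} \right> = \lambda ||{\bf{p}}||_2^2 = \lambda N$, so $\lambda = 1$ and ${\bf{p}} = {\bf{q}}$. Unwinding the substitution, $2{\bf{x}}-{\bf{1}} = 2{\bf{v}}-{\bf{1}}$, i.e. ${\bf{x}} = {\bf{v}}$.

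The only delicate point is the justification of the Cauchy--Schwarz equality case: I would make sure to use that the inner product is \emph{strictly} positive (here equal to $N>0$) so that the degenerate case ${\bf{p}}={\bf{0}}$ is excluded and the sign of $\lambda$ is pinned down to $+1$ rather than $-1$. Everything else is a routine verification once the tight inequality chain is in place.
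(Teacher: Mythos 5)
Your proof is correct, and it is essentially the argument behind the paper's lemma: the paper itself gives no inline proof (it defers entirely to the cited reference \cite{yuan2017exact}), where the same reasoning is used — deduce $\|2\mathbf{x}-\mathbf{1}\|_2^2 \leq N$ from the box constraint, squeeze $N = \langle 2\mathbf{x}-\mathbf{1}, 2\mathbf{v}-\mathbf{1}\rangle \leq \|2\mathbf{x}-\mathbf{1}\|_2\|2\mathbf{v}-\mathbf{1}\|_2 \leq N$ via Cauchy--Schwarz, and read both conclusions off the equality conditions. Your handling of the two delicate points (forcing $p_n^2=1$ coordinatewise, and ruling out $\lambda \neq 1$ using $\mathbf{p}\neq\mathbf{0}$ and the strictly positive inner product) is exactly what is needed, so there is no gap.
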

\begin{proof}
See \cite{yuan2017exact} for detailed proof.
\end{proof}

Based on Lemma 1, we have the following proposition with regard to problem (P2).
\begin{proposition}  \label{P_equivalence}
 Define ${\bf{x}}={\rm diag}({\bf{A}}), {\bf{v}}={\rm diag}({\bf{\bar A}})$ and ${\bm{\theta}}={\rm diag}({\bf{\Phi}})$, problem (P2) is then equivalent to the following problem:
\vspace{-12pt}
\begin{subequations}\label{Problem P_joint}
  \begin{align}
  (\text{P3}): \ \min_{  
 {\bf{0}} \leq {\bf{x}} \leq {\bf{1}} ,{\bf{v}},{\bm{\theta}} } \quad  \!\!& f_{\rm MSE}^{{\bf{x}},{\bf{v}},{\bm{\theta}}}   \\
 \!\!\mbox{s.t.} \quad 
       %&  {\bf{x}} = {\bf{\bar x}},\\
       &  ||2{\bf{v}}-1||_2^2 \leq N,  \label{C_v} \\
       & (2{\bf{x}}\!-\!1)^T(2{\bf{v}}\!-\!1)=N, \\
       &  {\bf{x}}^T {\bf{1}} = a \\
       &  |{\bm{\theta}}|_{n}=1, \forall n. \label{C_phi}
  \end{align} 
\end{subequations}
where $f_{\rm MSE}^{{\bf{x}},{\bf{v}},{\bm{\theta}}} = {\rm Tr} \left\{ ({\bf{I}}_M  \!+\! \sigma_b^{-2} {\bf{R}}_{b}({\bf{A}},{\bf{\Phi}}) \!+\! \sigma_c^{-2}{\bf{\bar R}}_{c}({\bf{\bar A}})   )^{-1}\right\}$ and ${\bf{\bar R}}_{c}({\bf{\bar A}})$ is obtained by replacing ${\bf{\bar A}}$ with ${\bf{A}}$ in ${\bf{R}}_{c}({\bf{A}})$.

\end{proposition}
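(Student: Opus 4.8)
The plan is to prove the equivalence in two stages: match the feasible regions of (P2) and (P3), and then verify that the two objectives coincide on this common region. First I would recast (P2) in terms of the diagonal vectors ${\bf{x}} = {\rm diag}({\bf{A}})$ and ${\bm{\theta}} = {\rm diag}({\bf{\Phi}})$. Under this substitution the structural constraint \eqref{P1_index_diag} holds automatically, the binary constraint \eqref{P1_index} reads ${\bf{x}} \in \{0,1\}^N$, the unit-modulus constraint \eqref{P1_RIS} becomes $|{\bm{\theta}}|_n = 1$ for all $n$, and the trace constraint ${\rm Tr}\{{\bf{A}}\} = a$ becomes ${\bf{x}}^T{\bf{1}} = a$ (for a binary vector, $\sum_n x_n$ equals both ${\rm Tr}\{{\bf{A}}\}$ and the zero-norm $\|{\rm diag}({\bf{A}})\|_0$). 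Thus (P2) minimizes $f_{\rm MSE}^{{\bf{A}},{\bf{\Phi}}}$ over $\{{\bf{x}} \in \{0,1\}^N : {\bf{x}}^T{\bf{1}} = a\}$ together with the unit-modulus ${\bm{\theta}}$.

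Next I would remove the nonconvex binary constraint via Lemma~\ref{binary}. Introducing the auxiliary vector ${\bf{v}} = {\rm diag}({\bf{\bar A}})$, Lemma~\ref{binary} guarantees that the coupled constraints ${\bf{0}} \leq {\bf{x}} \leq {\bf{1}}$, $\|2{\bf{v}}-{\bf{1}}\|_2^2 \leq N$, and $(2{\bf{x}}-{\bf{1}})^T(2{\bf{v}}-{\bf{1}}) = N$ jointly force ${\bf{x}} \in \{0,1\}^N$ and ${\bf{x}} = {\bf{v}}$; these are precisely constraint \eqref{C_v} and the coupling equality of (P3). For the reverse inclusion I would exhibit, for any binary ${\bf{x}}$, the companion ${\bf{v}} = {\bf{x}}$: since $(2x_n-1)^2 = 1$ whenever $x_n \in \{0,1\}$, we obtain $\|2{\bf{x}}-{\bf{1}}\|_2^2 = N$ and $(2{\bf{x}}-{\bf{1}})^T(2{\bf{x}}-{\bf{1}}) = N$, so the pair $({\bf{x}},{\bf{x}})$ is feasible for $\chi$. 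This yields a one-to-one correspondence between the feasible ${\bf{x}}$ of (P2) and the feasible pairs $({\bf{x}},{\bf{v}})$ of (P3), with the cardinality and unit-modulus constraints carried over verbatim.

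It then remains to match the objectives. The sole discrepancy between $f_{\rm MSE}^{{\bf{A}},{\bf{\Phi}}}$ and $f_{\rm MSE}^{{\bf{x}},{\bf{v}},{\bm{\theta}}}$ is that the connection-mode term uses ${\bf{\bar R}}_c({\bf{\bar A}})$ in place of ${\bf{R}}_c({\bf{A}})$. Because ${\bf{\bar R}}_c$ is obtained from \eqref{RcA} by the pure symbol substitution ${\bf{A}} \to {\bf{\bar A}}$, and because feasibility forces ${\bf{A}} = {\bf{\bar A}}$ (i.e. ${\bf{x}} = {\bf{v}}$) by Lemma~\ref{binary}, we have ${\bf{\bar R}}_c({\bf{\bar A}}) = {\bf{R}}_c({\bf{A}})$ at every feasible point, so the two objectives are identical there. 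Combining the feasibility correspondence with this objective identity delivers the claimed equivalence of (P2) and (P3).

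I expect the only genuine subtlety to lie in this objective-invariance step rather than in the forward application of Lemma~\ref{binary}. The purpose of splitting out ${\bf{\bar R}}_c$, which later allows the PDD scheme to decouple the ${\bf{v}}$-block from the ${\bf{x}}$-block, is purely algorithmic, so one must confirm explicitly that it leaves the optimal value untouched. This is safe precisely because the equality ${\bf{x}} = {\bf{v}}$ is enforced \emph{exactly} on the feasible set of (P3), not merely asymptotically, a fact that again rests on Lemma~\ref{binary}.
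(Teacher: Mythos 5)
Your proposal is correct and takes essentially the same route as the paper, whose entire proof is the one-line invocation of Lemma~\ref{binary}; you have simply made explicit the details that invocation leaves implicit, namely the vectorization of the constraints of (P2), the reverse feasibility check via the choice ${\bf{v}}={\bf{x}}$ (so that $\|2{\bf{x}}-{\bf{1}}\|_2^2=N$ for binary ${\bf{x}}$), and the observation that ${\bf{\bar R}}_{c}({\bf{\bar A}})={\bf{R}}_{c}({\bf{A}})$ on the feasible set because ${\bf{x}}={\bf{v}}$ is enforced exactly there. All of these steps are sound, so there is nothing to correct.
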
 
\begin{proof}
    %{\color{red} By virtue of Lemma 1, we can establish the equivalence between problem (P2) and problem (P3). }
    {\color{black} By virtue of Lemma 1, it is readily inferred that ${\bf{x}} \in \{0,1\}^N$ and ${\bf{x}}={\bf{v}}$ from the $l_2$ box non-separable constraints, i.e., ${\bf{0}} \leq {\bf{x}} \leq {\bf{1}}, ||2{\bf{v}}\!-\!{\bf{1}}||_2^2 \leq N, \left<2{\bf{x}}\!-\!{\bf{1}},2{\bf{v}}\!-\!{\bf{1}}\right>=N$.
    Then, we can rewrite problem (P2) in an equivalent form as shown in problem (P3). 
    Note that the newly formed constraint $(2{\bf{x}}\!-\!1)^T(2{\bf{v}}\!-\!1)=N$ is referred to as the complementarity constraint and it always holds that $(2{\bf{x}}\!-\!1)^T(2{\bf{v}}\!-\!1)\leq N$ for any feasible ${\bf{x}}$ and ${\bf{v}}$.}
\end{proof}

% Why PDD? 解决问题哪些困难？
From Proposition \ref{P_equivalence}, the intractable binary constraints have been rewritten in a tractable form by introducing an extra variable.
Unfortunately, problem (P3) is still difficult to solve due to the non-separable objective function and the equality constraints. 
In the following, we propose an efficient PDD method to address the aforementioned challenges.

 We first convert problem (P3) into its AL form by adding the equality constraints as a penalty term to the objective function, given by
  \begin{align} \label{Problem P_joint}
  (\text{P4}): \ \min_{ {\bf{0}} \leq {\bf{x}} \leq {\bf{1}} ,||2{\bf{v}}-1||_2^2 \leq N,|{\bm{\theta}}|_{n}=1, \forall n } \quad  \!\!&  {\mathcal{L}}({\bm{\theta}},{\bf{x}},{\bf{v}},\lambda,{\bm{\nu}} ), 
  \end{align} 
  \noindent where the augmented Lagrange function $ {\mathcal{L}}({\bm{\theta}},{\bf{x}},{\bf{v}},\lambda,{{\nu}})$ are defined as
 \begin{align}
      & {\mathcal{L}}({\bm{\theta}},{\bf{x}},{\bf{v}},\lambda,{{\nu}})   = f_{\rm MSE}^{{\bf{x}},{\bf{v}},{\bm{\theta}}}   \!+\!  {{\nu}} \left( (2{\bf{x}}\!-\!1)^T(2{\bf{v}}\!-\!1)\!-\!N \right)
      \\
     & \quad \!+\! \lambda ( {\bf{x}}^T{\bf{1}}-a ) \!+\! \frac{1}{2\rho} \left( |{\bf{x}}^T{\bf{1}}-a|_2^2 + |(2{\bf{x}}\!-\!1)^T(2{\bf{v}}\!-\!1)\!-\!N|_2^2 \right) \nonumber
 \end{align}
 where  $\rho$ represents the penalty parameter, $\lambda$ and ${{\nu}}$ denote the dual variables associated with the equality constraints ${\bf{x}}^T {\bf{1}} = a$ and $ (2{\bf{x}}\!-\!1)^T(2{\bf{v}}\!-\!1)\!=\!N $, respectively. 
It is observed that as $\rho$ approaches zero, the penalty term is forced to zero, i.e., the equality constraints are satisfied.

By referring to \cite{Shi2017PDD}, the PDD algorithm consists of double-loop iterations. Starting with the penalized form (P4), we respectively discuss the optimization problems in the inner and outer iterations in the sequel.

\begin{comment}
{\color{blue}
The key to using the PDD method is to find the appropriate locally tight upper bounds for the objective function so that BUSM can be applied to optimize the AL.
The solution to the AL problem (inner loop) is iteratively approximated via the inexact block coordinate descent (BCD) method [x]. 
}
\end{comment}

\vspace{-12pt}
\subsection{Inner Iteration of PDD}
In the inner iteration, we alternatively solve the AL problem with the BCD-type method by dividing the primal variables into three distinct blocks. Since each subproblem remains non-convex, block successive upper bound minimization (BSUM) is applied to address the AL problem. Consequently, the primal variables are iteratively updated with the inexact BCD method.
To this end, we minimize the objective function of (P4) by sequentially updating ${\bm{\theta}},{\bf{x}},{\bf{\bar x}}$ in the inner loop through the following marginal optimizations.

\subsubsection{ \textbf{${\bf{\theta}}$-subproblem} }
%1): Optimizing ${\bm{\theta}}$:
We optimize ${\bm{\theta}}$ while treating the remaining variables as constants. The subproblem is then formulated as
\begin{align}
    {\bm{\theta}}^{k+1} = \arg \min_{ |{\bm{\theta}}|_n=1,\forall n }  {\mathcal{L}}({\bm{\theta}},{\bf{x}}^{k},{\bf{v}}^{k},\lambda^{k},{\bm{\nu}}^{k})
\end{align}
where $k$ stands for the $k$-th iteration. Since the subproblem is nonconvex and encounters unit-modulus constraints, there are generally no oracles for deriving the globally optimal solution. 

Hence, we propose to use the majorization-minimization (MM) method in which a sequence of approximate versions of ${\mathcal{L}}$ is solved successively. 
 %One type of approximation is the tight upper bound for the original objective function. 
By denoting ${\bf{C}}={\bf{I}}_M \!+\!{\sigma_c^{-2}}{{\bf{P}}^H {\bf{ H}}_{r}^H {\bf{\bar A}}^{k} {\bf{ H}}_{r} {\bf{P}} }$ and ${\bf{X}}_{{\bm{\theta}}}=({\bf{H}}_{d}\!+\!{\bf{G}}^H ({\bf{I}}\!-\!{\bf{A}}^{k} ) {\bf{\Phi}} {\bf{H}}_{r}) {\bf{P}}$, the objective function for ${\bf{\theta}}$-subproblem can be equivalently formulated as $L_{{\bm{\theta}}}$ 
\begin{align}
    {\mathcal{L}}_{{\bm{\theta}}} & = {\rm Tr} \left\{ ( {\bf{C}} + {\bf{X}}_{{\bm{\theta}}}^H {\bm{\Lambda}}_b^{-1} {\bf{X}}_{{\bm{\theta}}} )^{-1 }\right\}, \nonumber \\
    & \overset{(a)}{=} {\rm Tr} \left\{ {\bf{C}}^{-1} \!-\! {\bf{C}}^{-1}{\bf{X}}_{{\bm{\theta}}}^H {\bf{Q}}_{{\bm{\theta}}}^{-1} {\bf{X}}_{{\bm{\theta}}} {\bf{C}}^{-1}  \right\}.
\end{align}
where (a) holds due to  the matrix identity $({\bf{A}}+{\bf{U}}{\bf{B}}{\bf{V}})^{-1}={\bf{A}}^{-1}-{\bf{A}}^{-1} {\bf{U}} ({\bf{I}} + {\bf{B}} {\bf{V}}{\bf{A}}^{-1} {\bf{U}} )^{-1} {\bf{B}} {\bf{V}}{\bf{A}}^{-1}$ in \cite{ zhang2017matrix}  and ${\bf{Q}}_{\bm{\theta}}={\bm{\Lambda}}_b \!+\! {\bf{X}}_{{\bm{\theta}}}{\bf{C}}^{-1} {\bf{X}}_{{\bm{\theta}}}^H$.
Based on the above manipulations, the problem for minimizing $L_{{\bm{\theta}}}$ is equivalent to maximize ${\rm Tr}({\bf{C}}^{-1}{\bf{X}}_{{\bm{\theta}}}^H {\bf{Q}}_{\bm{\theta}}^{-1} {\bf{X}}_{{\bm{\theta}}} {\bf{C}}^{-1})$, which has the matrix projection form w.r.t. ${\bf{X}}_{{\bm{\theta}}}$. Armed with the MM technique, a tight surrogate function can be found as 
\begin{align} \label{upper-bound}
    &{\rm Tr}({\bf{C}}^{-1}{\bf{X}}_{{\bm{\theta}}}^H {\bf{Q}}_{\bm{\theta}}^{-1} {\bf{X}}_{{\bm{\theta}}} {\bf{C}}^{-1}) \nonumber \\
    &\geq  2\Re\{ {\rm Tr}({\bf{C}}^{-1}({\bf{X}}_{{\bm{\theta}}}^{k})^H ({\bf{Q}}_{\bm{\theta}}^{k})^{-1} {\bf{X}}_{{\bm{\theta}}} {\bf{C}}^{-1} ) \} \nonumber \\
    & ~~ -{\rm Tr} ( ({\bf{Q}}_{\bm{\theta}}^{k})^{-1}   {\bf{X}}_{{\bm{\theta}}}^{k} {\bf{C}}^{-2} ({\bf{X}}_{{\bm{\theta}}}^{k})^H ({\bf{Q}}_{\bm{\theta}}^{k})^{-1} {\bf{Q}}_{\bm{\theta}} )
\end{align}
with equality achieved at ${\bf{X}}_{{\bm{\theta}}}={\bf{X}}_{{\bm{\theta}}}^{k}$, 
where ${\bf{X}}_{{\bm{\theta}}}^{k}$ is obtained by substituting $\bm{\theta}$ for $\bm{\theta}^{k}$ in ${\bf{X}}_{{\bm{\theta}}}$. And ${\bf{Q}}_{\bm{\theta}}^{k}$ is similarly defined.

Substituting  ${\bf{X}}_{{\bm{\theta}}}$ into the surrogate function and dropping the irrelevant term, the surrogate function can be simplified as
\begin{align} \label{L_Q}
    {\tilde {\mathcal{L}}}_{{\bm{\theta}}}=2\Re \{ {\rm Tr} (    {\bf{D}} {\rm diag}({\bm{\theta}}) ) \} \!-\! {\rm Tr}({\bf{U}}{\rm diag}({\bm{\theta}}) {\bf{V}} {\rm diag}({\bm{\theta}})^H)
\end{align}
where ${\bf{D}}$ is defined at the top of this page in \eqref{D}, ${\bf{U}}$ and ${\bf{V}}$ are defined as follows 
\begin{align}  
    & {\bf{U}} = ({\bf{I}}\!-\!{\bf{A}}^k) {\bf{G}} ({\bf{Q}}_{\bm{\theta}}^{k})^{-1} {\bf{X}}_{{\bm{\theta}}}^{k} {\bf{C}}^{-2} ({\bf{X}}_{{\bm{\theta}}}^{k})^H ({\bf{Q}}_{\bm{\theta}}^{k})^{-1} {\bf{G}}^H ({\bf{I}}\!-\!{\bf{A}}^k), \nonumber \\
    & {\bf{V}} = {\bf{H}}_{r} {\bf{P}} {\bf{C}}^{-1} {\bf{P}}^H {\bf{H}}_{r}^H . 
\end{align}
\begin{figure*} 
\vspace{-12pt}
\begin{align} \label{D}
    & {\bf{D}} = {\bf{H}}_{r} {\bf{P}} \left( {\bf{I}} \!-\! {\bf{C}}^{-1}{\bf{P}}^H {\bf{H}}_{d}^H ({\bf{Q}}_{\bm{\theta}}^{k})^{-1} {\bf{X}}_{{\bm{\theta}}}^{k} \right) {\bf{C}}^{-2} ({\bf{X}}_{{\bm{\theta}}}^{k})^H ({\bf{Q}}_{\bm{\theta}}^{k})^{-1} {\bf{G}}^H \left({\bf{I}}\!-\!{\bf{A}}^k \right) 
\end{align}
\vspace{-4mm}
\hrulefill
\end{figure*}
With the matrix identity ${\rm Tr}({\bf{A}}{\rm {diag}}({\bf{b}}){\bf{C}}{\rm {diag}}({\bf{b}})^H )={\bf{b}}^H ({\bf{C}}^T \odot {\bf{A}}){\bf{b}}$ \cite{Wang2023HWI}, ${\tilde L}_{{\bm{\theta}}}$ in \eqref{L_Q} can be further transformed into the quadratic form w.r.t. ${\bm{\theta}}$  as
  \begin{align}
  & {\tilde {\mathcal{L}}}_{{\bm{\theta}}} = {\bm{\theta}}^H ({\bf{V}}^T \!\odot \!  {\bf{U}} ){\bm{\theta}} \!-\! 2 \Re\{ {\bm{\theta}}^H {\rm diag}({\bf{D}}^H )  \}
  \end{align} 
Despite the objective ${\tilde L}_{{\bm{\theta}}}$ is convex with regard to ${\bm{\theta}}$, the unit-modulus constraints still render the optimization difficult to solve. 
We apply the MM technique again to obtain a closed-form solution by solving the tight upper-bound problem.
Since  the gradient of ${\tilde L}_{{\bm{\theta}}}$  is easily found to be Lipschitz continuous, we have
\begin{align}
    {\tilde {\mathcal{L}}}_{{\bm{\theta}}} \leq &  {\bm{\theta}}^H {\bf{L}} {\bm{\theta}} + 2\Re\{{\bm{\theta}}^H({\bf{V}}^T \!\odot \!  {\bf{U}}\!-\!{\bf{L}}){\bm{\theta}}^k\} \nonumber \\
    & + ({\bm{\theta}}^k)^H ({\bf{L}}\!-\!{\bf{V}}^T \!\odot \!  {\bf{U}}){\bm{\theta}}^k \!-\! 2 \Re\{ {\bm{\theta}}^H {\rm diag}({\bf{D}}^H )  \}
\end{align}
where ${\bf{L}}$ denotes the Lipschitz constant associated with  ${\tilde {\mathcal{L}}}_{{\bm{\theta}}}$. 
To avoid the high computational complexity induced by the eigenvalue decomposition of ${\bf{V}}^T \!\odot \!  {\bf{U}}$, we select ${\bf{L}}={\rm Tr}( {\bf{V}}^T \odot {\bf{U}} ) {\bf{I}}$ as the Lipschitz constant. 
By virtue of the unit-modulus property of ${\bm{\theta}}$, the quadratic term of the upper bound is equivalently transformed to a constant term,  i.e., ${\bm{\theta}}^H {\bf{L}} {\bm{\theta}} = N {\rm Tr}( {\bf{V}}^T \odot {\bf{U}} )$. 
Based on the above fact, the $\bm{\theta}$-subproblem is finally simplified as 
  \begin{align}
  (\text{P5}): \ \min_{ |{\bm{\theta}}|_{n}=1, \forall n } \quad  \!\!&  \Re\{{\bm{\theta}}^H {\bm{\tilde \theta}}^{k} \}
  \end{align} 
where $ {\bm{\tilde \theta}}^{k} =   ({\bf{V}}^T \odot {\bf{U}} \!-\!  {\rm Tr}( {\bf{V}}^T \odot {\bf{U}} ) {\bf{I}}  ){\bm{\theta}}^{k} \!-\! {\rm diag}({\bf{D}}^H ) $.
The optimal closed-form solution of ${\bm{\theta}}$ at $k$-th iteration is obtained as 
\begin{align} \label{ADM_A}
    { {{\theta}}^{\star}_n = } \left\{
      \begin{aligned}
     - e^{j \angle{{{\tilde \theta}}^{k}_n } }, \qquad \qquad & \quad  {{\tilde \theta}}^{k}_n \neq 0, \\
     {\rm any}~{{\theta}}_n~{\rm with}~|{{\theta}}_n|=1,  & \quad {\rm otherwise},
      \end{aligned}
    \right. \forall n\in \mathcal{N}.
\end{align}

\subsubsection{ \textbf{${\bf{x}}$-subproblem} }
%2): Optimizing ${\bf{x}}$:
The variable ${\bf{x}}$ is then updated by solving the following box-constrained subproblem
\begin{align}
    {\bf{x}}^{k+1} = \arg \min_{ {\bf{0}} \leq {\bf{x}} \leq {\bf{1}} } {\mathcal{L}}({\bm{\theta}}^{k+1},{\bf{x}},{\bf{v}}^{k},\lambda^{k},{{\nu}}^{k} ).
\end{align}
Due to the inversion of the matrix-valued quadratic function in the objective function, this problem is non-tractable and has no closed-form solution. Similarly, the MM technique is utilized to find a more tractable reformulation of the objective function. Specifically, the tight upper bound of ${\mathcal{L}}({\bm{\theta}}^{k+1},{\bf{x}},{\bf{v}}^{k},\lambda^{k},{{\nu}}^{k} )$ at the point ${\bf{x}}^k$ is derived as ${{\mathcal{L}}}_{\bf{x}}$ 
\begin{align}
    {{\mathcal{L}}}_{\bf{x}} =&  {\rm Tr} ({\bf{C}}^{-1}) \!-\! 2\Re \{ {\rm Tr} ( {\bf{C}}^{-2}({\bf{Y}}_{{\bf{x}}}^{k})^H ({\bf{Q}}_{{\bf{x}}}^{k})^{-1} {\bf{Y}}_{{\bf{x}}}  ) \}  \nonumber \\
    & +\! {\rm Tr} \left( ({\bf{Q}}_{{\bf{x}}}^{k})^{-1}   {\bf{Y}}_{{\bf{x}}}^{k} {\bf{C}}^{-2} ({\bf{Y}}_{{\bf{x}}}^{k})^H ({\bf{Q}}_{{\bf{x}}}^{k})^{-1} {\bf{Q}}_{{\bf{x}}} \right) \nonumber \\
    & \quad \!+\! \lambda ( {\bf{x}}^T{\bf{1}}-a )  \!+\! {{\nu}} \left( (2{\bf{x}}\!-\!1)^T(2{\bf{v}}\!-\!1)\!-\!N \right)
     \nonumber \\
     & \quad \!+\! \frac{1}{2\rho} \left( |{\bf{x}}^T{\bf{1}}-a|_2^2 + |(2{\bf{x}}\!-\!1)^T(2{\bf{v}}\!-\!1)\!-\!N|_2^2 \right)
\end{align}
where $ {\bf{Y}}_{{\bf{x}}}=({\bf{H}}_{d}+{\bf{G}}^H ({\bf{I}}\!-\!{\bf{A}}) {\bf{\Phi}}^{k+1} {\bf{H}}_{r}) {\bf{P}} $ and ${\bf{Q}}_{{\bf{x}}}={\bm{\Lambda}}_b \!+\! {\bf{Y}}_{{\bf{x}}}{\bf{C}}^{-1} {\bf{Y}}_{{\bf{x}}}^H$. ${\bf{Y}}_{{\bf{x}}}^{k}$ is obtained by substituting ${{\bf{x}}}^{k}$ for ${{\bf{x}}}$ in ${\bf{Y}}_{{\bf{x}}}$. 

With some mathematical manipulations and dropping the constant term, the problem for minimizing ${{\mathcal{L}}}_{\bf{x}}$ can be equivalently reformulated as a quadratic minimization problem with the box constraints:
\begin{align}
   (\text{P6}): \ \min_{{\bf{x}}} \quad &{\tilde {\mathcal{L}}}_{\bf{x}}={\bf{x}}^T {\bf{\Xi}} {\bf{x}} +{\bf{x}}^T {\bm{\zeta}} \nonumber \\
    \mbox{s.t.} \quad & {\bf{0}} \leq {\bf{x}} \leq {\bf{1}}. \label{C_01}
\end{align} 
where ${\bf{\Xi}}$ and ${\bm{\zeta}}$ are defined in \eqref{x_sub} at the top of next page.
\begin{figure*}   
\vspace{-12pt}
\begin{align} \label{x_sub}
    
    {\bf{\Xi}} & = \left( {\bm{\Phi}}^{k+1} {\bf{H}}_{r} {\bf{P}} {\bf{C}}^{-1} {\bf{P}}^H {\bf{H}}_{r}^H ({\bm{\Phi}}^{k+1})^H \right)^T \odot \left( {\bf{G}} {\bf{Q}}_{\bf{x}}^{-1} {\bf{Y}}_{{\bf{x}}}^{k} {\bf{C}}^{-2} ({\bf{Y}}_{{\bf{x}}}^{k})^H {\bf{Q}}_{\bf{x}}^{-1} {\bf{G}}^H \right) + \frac{1}{2\rho} \left( {\bf{1}}{\bf{1}}^T \!+\! 4 (2{\bf{v}}-{\bf{1}})(2{\bf{v}}-{\bf{1}})^T  \right) \nonumber \\
    {\bm{\zeta}} & = 2 \Re \left\{ {\rm diag} \left( {\bm{\Phi}}^{k+1} {\bf{H}}_{r} {\bf{P}} ( {\bf{I}} \!-\! {\bf{C}}^{-1}{\bf{P}}^H {({\bf{H}}_{d}\!+\!{\bf{G}}^{H}{\bm{\Phi}}^{k+1} {\bf{H}}_{r} )^{H}} {\bf{Q}}_{\bf{x}}^{-1} {\bf{Y}}_{{\bf{x}}}^{k} ) {\bf{C}}^{-2} ({\bf{Y}}_{{\bf{x}}}^{k})^H {\bf{Q}}_{\bf{x}}^{-1} {\bf{G}}^H  \right) \right\} \nonumber \\
    & \qquad \!+\! \frac{1}{\rho} \left( 
   (\rho \lambda-a ){\bf{1}} \!+\!  2(\rho {{\nu}} \!-\!N )( 2{\bf{v}} \!-\!{\bf{1}} ) -\! 2(2{\bf{v}} \!-\!{\bf{1}})^T{\bf{1}}  ( 2{\bf{v}} \!-\!{\bf{1}} ) 
   \right) 
\end{align}
\vspace{-4mm}
\hrulefill
\end{figure*}
However, due to the indefiniteness of ${\bf{\Xi}}$, the box-constrained optimization problem is generally not a standard convex problem. 
To efficiently address the challenges, we divide the objective into the convex and concave parts and transform it into a box-constrained difference-of-convex (DC) problem. 
The convex-concave procedure (CCP) is then proposed to find a local optimum by replacing the concave terms with a convex upper bound. 

We first divide the indefinite term ${\bf{\Xi}}$ into a positive semi-definite matrix ${\bf{\Xi}}_{p}$ and a negative semi-definite matrix ${\bf{\Xi}}_{n}$ by leveraging the eigenvalue decomposition (EVD). Specifically, we have
\begin{align}
    {\bf{\Xi}}_{p} = {\bf{U}} \frac{{\bm{\Lambda}}+|{\bm{\Lambda}}|}{2} {\bf{U}}^H, \quad 
    {\bf{\Xi}}_{n} = {\bf{U}} \frac{{\bm{\Lambda}}-|{\bm{\Lambda}}|}{2} {\bf{U}}^H,
\end{align}
where ${\bf{\Xi}}={\bf{U}}{\bm{\Lambda}}{\bf{U}}^H$ and $|{\bm{\Lambda}}|$ denotes the element-wise absolute function. 
By approximating the concave part ${\bf{x}}^T {\bf{\Xi}}_{n} {\bf{x}}$ with an upper bound at the current point ${\bf{x}}^{k}$, the problem (P6) is reformulated as 
\begin{align}
    (\text{P7}): \ \min_{{\bf{x}}} \quad &{\bf{x}}^T {\bf{\Xi}}_{p} {\bf{x}} +{\bf{x}}^T {\bm{\zeta}} + {\bf{x}}^T {\bf{\Xi}}_{n} {\bf{x}}^{k} + ({\bf{x}}^{k})^T {\bf{\Xi}}_{n} {\bf{x}} \nonumber \\
    \mbox{s.t.} \quad & \eqref{C_01}.
\end{align}
It is readily inferred that the box-constrained optimization problem is a standard convex problem. Therefore, it can be efficiently solved via the standard numerical solvers, like CVX\cite{cvx}.

\subsubsection{ \textbf{${\bf{v}}$-subproblem }}
We then optimize ${\bf{v}}$ with the fixed $ \{{\bm{\theta}}^{k+1},{\bf{x}}^{k+1} \}$ via solving the following subproblem
\begin{align}
    {\bf{v}}^{k+1} = \arg \min_{ ||2{\bf{v}}-{\bf{1}}||\leq \sqrt{N}  } {\mathcal{L}}({\bm{\theta}}^{k+1},{\bf{x}}^{k+1},{\bf{v}},\lambda^{k},{{\nu}}^{k} ).
\end{align}
With equivalent transformation, the problem turns into 
\begin{align}
    (\text{{\color{black}P8}}): \ \min_{{\bf{v}}} \quad &{{\mathcal{L}}}_{\bf{v}} 
    \quad \mbox{s.t.} \quad ||2{\bf{v}}-{\bf{1}}||\leq \sqrt{N}. 
\end{align}
where 
\begin{align}
    {{\mathcal{L}}}_{\bf{v}} = &  {\rm Tr} \left\{ ({\bf{I}}_M  \!+\! \frac{{\bf{R}}_{b}({\bf{A}},{\bf{\Phi}})}{{\sigma_b^{2}}} \!+\!  \frac{{\bf{P}}^H {\bf{ H}}_{r}^H {\rm diag}({\bf{v}}) {\bf{ H}}_{r} {\bf{P}}}{{ \sigma_c^{-2}}}    )^{-1}\right\} \nonumber \\
   & + \frac{1}{2\rho}  \left|(2{\bf{x}}\!-\!{\bf{1}})^T(2{\bf{v}}\!-\!1)\!-\!N + \rho{\nu} \right|_2^2 
\end{align}
Recall that ${\rm Tr}\{{\bf{X}}^{-1}\}$ is convex on $\mathbb{S}_{++}^{n}$ \cite{tenenbaum2011mse}. The first term of ${{\mathcal{L}}}_{\bf{v}}$ exhibits convexity in ${\bf{v}}$, as convexity is preserved through composition with the affine mapping of  ${\bf{v}}$. 
Combining with the quadratic term in ${\mathcal{L}}_{\bf{v}}$, problem (P7) is indeed a convex optimization problem.
Despite its convexity, an optimal closed-form solution for problem (P7) is not directly available.
Nevertheless, the optimal solution can be efficiently obtained by employing a standard CVX solver.

\subsection{ Outer Iteration of PDD }
For the outer iteration of the PDD procedure, the dual variables ${\lambda^{k}},{\bm{\nu}}^{k}$ and the penalty factor $\rho$ are updated. These variables are updated by optimizing the constraint violation quantity, defined as:
{\color{black}
\begin{align}
    h = \max \left( |{\bf{x}}^T{\bf{1}}-a|,  |(2{\bf{x}}\!-\!1)^T(2{\bf{v}}\!-\!1)\!-\!N|  \right)
\end{align}}
The outcome of this optimization yields the update scheme as follows: if $h < \epsilon$, we update the dual variables via the recursive equalities:
 \begin{align}
     &\lambda^{k+1} = \lambda^{k} + \frac{1}{\rho^{k}}({\bf{x}}^T{\bf{1}}-a), \\
     & {\bm{\nu}}^{k+1} = {\bm{\nu}}^{k} + \frac{1}{\rho^{k}}( (2{\bf{x}}\!-\!1)^T(2{\bf{v}}\!-\!1)-N ).
 \end{align}
 Otherwise, we decrease the penalty parameter $\rho$ to increase the penalty, i.e.,
 \begin{align}
     \rho^{k+1} = {\alpha} \rho^{k} 
 \end{align}
where ${\alpha}\in (0,1)$.
Therefore, the PDD method adaptively switches between the AL and the penalty method. This adaptive strategy is expected to find an appropriate penalty parameter $\rho$, where the PDD method could eventually converge.

\subsection{Discussion}
Finally, the inexact BCD-based PDD algorithm for solving the problem (P2) is summarized in {\textbf{Algorithm~\ref{inexact BCD-based PDD}}}. 
The proposed algorithm is guaranteed to converge to a stationary solution via an iterative update of the primal and dual variables, as well as the penalty factor \cite{shi2020penalty}.
The initial values of the dual variables and the penalty parameter should be selected carefully to ensure the penalty term remains sufficiently small. 
Additionally, instead of setting a maximum number of iterations, we choose the relative objective progress (RBP) condition as the termination criterion, i.e.,
\begin{align}
    \frac{|{\mathcal{L}}({\Upsilon}^{k}) - {\mathcal{L}}({\Upsilon}^{k-1})|}{{\mathcal{L}}({\Upsilon}^{k-1})}\leq \epsilon   
\end{align}
where ${\Upsilon} \triangleq \{ {\bm{\theta}},{\bf{x}},{\bf{v}},\lambda,{{\nu}}\}$.
 The computational complexity of {\textbf{Algorithm~\ref{inexact BCD-based PDD}}} primarily stems from the update of primal variables in steps 3-5. In step 3, the computation of matrix inversion and multiplication contributes to a complexity of $O(N^2M+N_r^3)$. Steps 4 and 5 involve the numerical calculation of the CVX solver, which has a complexity of $O(N^{3.5}\log(1/\epsilon))$, with $\epsilon$ representing the desired accuracy.
 Therefore, the total complexity of {\textbf{Algorithm~\ref{inexact BCD-based PDD}}} can be estimated as $O(I_{PDD}(2N^{3.5}\log(1/\epsilon)+N^2M+N_r^3))$, where $I_{PDD}$ denotes the number of iterations for the proposed inexact BCD-based PDD algorithm.

  %\vspace{-4mm}
\subsection{Special Cases} 
{\color{black} In this subsection, we focus primarily on the RDARS-aided single-user system and explore two specific cases, i.e., the single-antenna BS case and the line-of-sight (LoS) case. We aim to investigate the intuitive beamforming and the mode selection solution for the MSE minimization problem in these cases.

In the RDARS-aided one single-antenna user communication system, the average MSE minimization problem (P2) degenerates to the following maximization problem 
%\subsubsection{MISO case}
\begin{align}
   (\text{P9}): \max_{{\bm{\theta}},{\bf{A}}} &\quad ||{\bf{h}}_{d} \!+\! {\bf{G}}^H ({\bf{I}} \!-\! {\bf{A}}) {\rm diag}({\bm{\theta}}) {\bf{h}}_{r} ||_2^2  +{\bf{h}}_{r}^H {\bf{A}} {\bf{h}}_{r} \nonumber \\
    \mbox{s.t.} &\quad \eqref{P2_C1} 
\end{align} 
where ${\bf{h}}_d$ and ${\bf{h}}_r$ denote the reduced channels from the users to the BS and the RDARS, i.e., ${\bf{H}}_d$ and ${\bf{H}}_r$, in the RDARS-aided single-user scenario, respectively.
Then, the optimal closed-form solutions are derived to give insightful observations for the intuitive beamforming design and the mode selection optimization in these two special cases. 

\subsubsection{Single-antenna BS case} 
Assuming the BS has only one antenna, we have the following insights for the beamforming design and the mode selection.
\begin{proposition}\label{pro_SISO}
     For the RDARS-aided one single-antenna user system with single-antenna BS, the selection matrix ${\bf{A}}$ can be simplified to select the first $a$ largest channels between the user and the RADRS, i.e., 
    $
        {\mathcal{A}=\{ c_1,c_2,...,c_a \} },
    $
    where the indices $c_i$ are defined according to the rank ordering of the channel gain, i.e., $h_{r,c_1}^2 \!\geq\! \dots \!\geq\! h_{r,c_N}^2 $. 
    The reflection coefficients are optimally solved by ${\bm{\theta}}^{\star}=e^{j (\angle{ {{h}}_{d} }- \angle{{\bf{h}}_{r}} + \angle{{\bf{g}}}) }$, where ${\bf{g}}$ denotes the reduced channel ${\bf{G}}$ in the single-antenna BS case.    
\end{proposition}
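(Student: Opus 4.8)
The plan is to decouple (P8) into a continuous phase-alignment problem and a discrete mode-selection problem, solving the former in closed form for an arbitrary selection and then reducing the latter to a ranking under the standing regime $a\ll N$.

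First I would specialize (P8) to the single-antenna single-user setting. Setting $N_r=M=1$ collapses $\mathbf{h}_{d}$ to a scalar $h_d$ and $\mathbf{G}$ to the vector $\mathbf{g}$, so that, writing $\mathcal{A}$ for the connection index set and $\mathcal{A}^c=\mathcal{N}\setminus\mathcal{A}$ for the reflection set, the objective takes the separable form
\begin{equation}
\Big| h_d + \sum_{n\in\mathcal{A}^c} g_n^* \theta_n h_{r,n} \Big|^2 + \sum_{n\in\mathcal{A}} |h_{r,n}|^2 ,
\end{equation}
where $g_n^*$ denotes the $n$-th entry of $\mathbf{g}^H$. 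Here only the first (reflection) term depends on $\bm{\theta}$, and only through the reflecting elements.

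Next, for a fixed $\mathcal{A}$ I would maximize the first term purely by phase alignment. The triangle inequality gives $\big|h_d + \sum_n g_n^*\theta_n h_{r,n}\big| \le |h_d| + \sum_n |g_n|\,|h_{r,n}|$, with equality precisely when each summand $g_n^*\theta_n h_{r,n}$ is co-phased with $h_d$. Enforcing $\angle(g_n^*\theta_n h_{r,n})=\angle h_d$ and using $\angle g_n^*=-\angle g_n$ yields $\theta_n^\star=e^{j(\angle h_d+\angle g_n-\angle h_{r,n})}$; since the phases of the connected elements never enter the objective, this closed form extends to all $n$, recovering the claimed $\bm{\theta}^\star=e^{j(\angle h_d-\angle\mathbf{h}_r+\angle\mathbf{g})}$. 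Substituting it back, the maximized reflection term becomes $\big(|h_d|+\sum_{n\in\mathcal{A}^c}|g_n|\,|h_{r,n}|\big)^2$.

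This leaves the cardinality-constrained selection
\begin{equation}
\max_{|\mathcal{A}|=a}\ \Big(|h_d|+\sum_{n\notin\mathcal{A}}|g_n|\,|h_{r,n}|\Big)^2 + \sum_{n\in\mathcal{A}}|h_{r,n}|^2 ,
\end{equation}
and I expect this to be the main obstacle. The selection is genuinely coupled: promoting an element to the connection mode adds $|h_{r,n}|^2$ to the second term but removes $|g_n|\,|h_{r,n}|$ from the coherently combined reflection sum, so a plain exchange argument does not settle optimality for arbitrary $\mathbf{g}$. To resolve it I would invoke the standing assumption $a\ll N$: since only $a$ out of $N$ terms are deleted from the reflection sum, the factor $(\mathbf{I}-\mathbf{A})$ can be treated as approximately $\mathbf{I}$ in the first term, making it essentially invariant to the choice of $\mathcal{A}$. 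The selection is then governed by the connection term alone, and maximizing $\sum_{n\in\mathcal{A}}|h_{r,n}|^2$ over $a$-subsets is immediate — one retains the $a$ largest gains — which gives $\mathcal{A}=\{c_1,\dots,c_a\}$ with $|h_{r,c_1}|^2\ge\cdots\ge|h_{r,c_N}|^2$. I would flag that this final step is an approximation justified by $a\ll N$ (and sharpest when the per-element reflection gains $|g_n|$ are small), rather than an exact optimality proof for the fully coupled problem.
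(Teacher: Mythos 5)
Your proposal is correct and follows essentially the same route as the paper's Appendix~A: co-phase each reflected term with $h_d$ to obtain $\theta_n^{\star}=e^{j(\angle h_d-\angle h_{r,n}+\angle g_n)}$ (arbitrary on the connected elements), substitute back, and reduce mode selection to the identical problem $\max_{|\mathcal{A}|=a}\bigl(|h_d|+\sum_{n\notin\mathcal{A}}|g_n||h_{r,n}|\bigr)^2+\sum_{n\in\mathcal{A}}|h_{r,n}|^2$, which both you and the paper settle by discarding the coupling term and retaining the $a$ largest $|h_{r,n}|$. One caveat on the final step: the paper drops $x_n|g_n||h_{r,n}|$ on the grounds that $|g_n|\ll 1$, and that condition --- which you only mention parenthetically --- is the operative one, whereas your primary justification $a\ll N$ does not suffice by itself: deleting $a$ of $N$ terms leaves the coherent sum $S=|h_d|+\sum_{n}|g_n||h_{r,n}|$ nearly unchanged only in \emph{relative} terms, while the absolute first-order loss $2S\sum_{n\in\mathcal{A}}|g_n||h_{r,n}|$ is amplified by $S$ (which grows with $N$) and can dominate the connection gain $\sum_{n\in\mathcal{A}}|h_{r,n}|^2$ when the $|g_n|$ are not small, in which case ranking by $|h_{r,n}|^2$ alone is no longer optimal (e.g., an element with large $|h_{r,n}|$ but large $|g_n|$ should stay in reflection mode).
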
}
\begin{proof}
    See Appendix A for the detailed proof.
\end{proof}

\subsubsection{LoS case}
Assuming the BS and the RDARS are properly deployed, the deterministic LoS links exist in the RDARS-related channels. Then, we conclude the following proposition.
\begin{proposition}\label{pro_LoS}
     {\color{black}
     Assume the LoS BS-RDARS channel ${\bf{G}}=\kappa_g {\bf{a}}_{gr}  {\bf{a}}_{gt}^H$, where ${\bf{a}}_{gt}$ denotes the transmit steering vector of the BS antennas and ${\bf{a}}_{gr}$ represents the receive steering vector of the RDARS elements. Similarly, the LoS RDARS-user channel is assumed to be ${\bf{h}}_r=\kappa_r {\bf{a}}_r$ with the RDARS transmit steering vector ${\bf{a}}_r$. $\kappa_g$ and $\kappa_r$ denote the large-scale channel coefficients for BS-RDARS and RDARS-user channels, respectively.
     Then, the selection matrix ${\bf{A}}$ can be chosen arbitrarily and the reflection coefficients can be optimally solved with 
    $ {\bm{\theta}}^{\star}=e^{j(\angle{\bf{{\tilde a}}_{r}} - \angle{{\bf{h}}_{d}^H  {\bf{ a}}_t})}.
    $, where ${\bf{\tilde a}}_r=\kappa_g{\bf{a}}_{gr}^H ({\bf{I}} \!-\! {\bf{A}}) {\rm diag}({\bf{h}}_{r})$.}
\end{proposition}

\begin{proof}
    See Appendix B for the detailed proof.
\end{proof}
Proposition \ref{pro_LoS} indicates the optimal mode selection is significant to fully unleash the potential of the RDARS unless both BS-RADRS and RDARS-user channels are LoS channels. In other words, an optimized selection strategy is necessary to enhance the performance of the RDARS-aided system in fading channels. 
A similar conclusion can be derived for the multi-user case under LoS channels, though the proof is omitted here for simplicity.

\begin{algorithm} \label{inexact BCD-based PDD}
    \normalsize
      \caption{Inexact BCD-based PDD Algorithm }
     \SetKwInOut{Input}{Input}
     \SetKwInOut{Output}{Output}
     \Input{System parameters  $M,N_r,N,a$, the threshold $\epsilon$, etc.}
     \Output{${\bf{A}}^{\star}$, ${\bf{W}}^{\star}$  and RIS phase shifts ${\bm{\Phi}}^{\star}$.}
     Initialize $ ({\bm{\theta}}^{0},{\bf{x}}^{0},{\bf{v}}^{0})$, and set the initial $\lambda,{\bm{\nu}},{\rho}$.\\ 
     % update ${\bf{M}}_0^{-1}$, ${\mathcal{N}=\{1,2,...,N}\}$\\
      \Repeat{ RBP termination criterion is met }
      {

       update ${\bm{\theta}}$ via (21), \\
       update ${\bf{x}}$ by solving (24) via CVX solver, \\ 
       update ${\bf{v}}$ by solving (30) via CVX solver,\\
       if $h<\epsilon$ \\
       \quad update $\lambda$ via (33) and $\bm{\nu}$ via (34) \\
       else \\
       \quad update $\rho$ via (35) \\
       end \\
       $k = k+1$
      }
      Return ${\bf{W}}^{\star}$ in \eqref{W_opt}, ${\bf{\Phi}}^{\star}={\rm diag}({\bm{\theta}})$, ${\bf{A}}^{\star}={\rm diag}({\bf{x}})$.        
\end{algorithm}

\section{Low-Complexity AO Algorithm}
Since the BCD iterations in the PDD algorithm involve the numerical calculation of the CVX solver, the computational complexity is unaffordable as the number of RDARS's elements increases.
%This section proposes a greedy-search-based AO algorithm to address such issues. 
In the following, we first approximate the objective function and subsequently present a greedy-search-based AO  algorithm, significantly reducing the computational complexity while yielding a near-optimal solution.

\vspace{-12pt}
\subsection{Mode Selection Optimization} 
Referring to \eqref{obj_mse}, the mode selection matrix ${\bf{A}}$ has a crucial impact on the effective channel related to the elements in the reflection mode between the BS and users, i.e., ${\bf{H}}_b$, as well as the effective channel related to the elements in the connection mode, ${\bf{H}}_c$.
Rewriting ${\bf{H}}_b$ as ${\bf{H}}_b={\bf{H}}_b'-{\bf{G}}^H {\bf{A}} {\bf{\Phi}} {\bf{H}}_{r}$, where ${\bf{H}}_b'={\bf{H}}_{d}+{\bf{G}}^H {\bf{\Phi}} {\bf{H}}_{r}$,  it is reasonable to ignore the term ${\bf{G}}^H {\bf{A}} {\bf{\Phi}} {\bf{H}}_{r}$  and approximate ${\bf{H}}_b$ as ${\bf{H}}_b'$ due to the multiplicative fading effect and the small number of elements in the connection mode.  
Here, ${\bf{H}}_b'$  represents the effective channel of the RDARS where whole elements work in the reflection mode.
Based on the aforementioned approximation, the objective function $f_{\rm MSE}^{ {\bf{A}},{\bf{\Phi}} }$ can be approximated as $f_{\rm MSE}^{app}$, given by:
\begin{align}\label{obj_mse_app}
    &f_{\rm MSE}^{app} = {\rm Tr} \left\{ ({\bf{I}}_M  \!+\! {\bf{\bar R}}_{b}({\bf{\Phi}}) \!+\! { \sigma_c^{-2}} {\bf{R}}_{c}({\bf{A}})   )^{-1} \right\} ,
\end{align}
where ${\bf{\bar R}}_{b}({\bf{\Phi}})={\bf{P}}^H {\bf{H}}_b^{'H}  {\bm{\Lambda}}_B^{-1} {\bf{H}}_b^{'}{\bf{P}}$. 

Due to ${\bf{R}}_{c}({\bf{A}})= {{\bf{P}}^H {\bf{ H}}_{r}^H {\bf{A}} {\bf{ H}}_{r} {\bf{P}} }$, the mode selection optimization problem turns into selecting a sub-channel matrix of size $a\times M$ leading to the lowest value of $f_{\rm MSE}^{app}$  from the entire channel between users and the RDARS, i.e., ${\bf{ H}}_{r}$. 
Let ${\bf{H}}_{r,x}$ represent the submatrix of $x$ rows chosen from ${\bf{H}}_{r}$ after $x$ selections and $f_{{\rm MSE},x}^{app}$ denotes the approximated MSE value at $x$-th selection, $x=1,2,...,a$. 
We have $f_{{\rm MSE},x}^{app} =   {\rm Tr} \left\{ ( {\bf{I}}_M \!+\! {\bf{\bar R}}_{b}({\bf{\Phi}})  \!+\! {\sigma_c^{-2}} {\bf{P}}^H {\bf{H}}_{r,x}^{H}  {\bf{H}}_{r,x}    {\bf{P}} )^{-1} \right\}$.
Specially, $f_{\rm MSE,0}^{app}\!=\!{\rm Tr} \left\{ \left( {\bf{I}}_M \!+\! {\bf{\bar R}}_{b}({\bf{\Phi}}) \right)^{-1} \right\}$. 
When $j$-th row of ${\bf{H}}_{r}$, i.e., ${\bf{h}}_{j}\in \mathbb{C}^{1\times M}$, is selected from the rest of candidate rows in the $x+1$ step, the newly formed $(x+1) \times M$ submatrix ${\bf{H}}_{r,x+1}$ is represented by $[{\bf{H}}_{r,x}^H,{\bf{h}}_{j}^H]^H$. 
Then, the approximated MSE value in the $x+1$-th selection is derived as
\begin{align}
   f_{{\rm MSE},x+1}^{app} & = {\rm Tr} \left\{ \left( {\bf{I}}_M \!+\!  {\bf{\bar R}}_{b}({\bf{\Phi}})  \!+\! {\sigma_c^{-2}} {\bf{P}}^H {\bf{H}}_{r,x+1}^{H}   {\bf{H}}_{r,x+1}    {\bf{P}} \right)^{-1} \right\} \nonumber \\
  & \overset{(b)}{=} f_{\rm MSE,x}^{app} -  \Delta_{j,x}
\end{align}
where ${\bf{M}}_{x}={\bf{I}}_M \!+\!  {\bf{\bar R}}_{b}({\bf{\Phi}})  \!+\! {\sigma_c^{-2}} {\bf{P}}^H {\bf{H}}_{r,x}^{H}   {\bf{H}}_{r,x}    {\bf{P}}$ and $\Delta_{j,x}$ is defined as 
$
  \Delta_{j,x}=\frac{ {\bf{h}}_{j} {\bf{P}} {\bf{M}}_{x}^{-2} {\bf{P}}^H {\bf{h}}_{j}^H }{ {\sigma_c^2} + {\bf{h}}_{j} {\bf{P}} {\bf{M}}_{x}^{-1} {\bf{P}}^H {\bf{h}}_{j}^H  }  
$.
The equation (b) holds with the Sherman-Morrison identity, i.e.,  $({\bf{B}}+{\bf{u}}{\bf{v}}^T)={\bf{B}}^{-1}-\frac{{\bf{B}}^{-1}{\bf{u}}{\bf{v}}^T {\bf{B}}^{-1}}{1+{\bf{v}}^T{\bf{B}}^{-1}{\bf{u}}}$. 
$\Delta_{j,x}$ represents the MSE reduction when choosing $j$-th row of ${\bf{H}}_{r}$ in the $x+1$-th step and remains a positive value when ${\bf{h}}_{j} \neq {\bf{0}}$. 

Based on the aforementioned analysis, a greedy search method can be utilized to find a high-quality solution \cite{Gharavi2004GS}. Specifically, in the $x+1$-th step, the MSE reduction $\Delta_{j,x}$ of all rows in the candidate sets is computed and compared to identify the best one that contributes the most to the MSE value. Then, the selected row is added to the sub-channel matrix in the next selection step. 
In other words, the submatrix ${\bf{H}}_{r,x+1}$ is updated by $[{\bf{H}}_{r,x}^H,{\bf{h}}_{J_{x}}^H]^H$, where 
\begin{align}\label{J}
    J_{x}= \arg \max_{j} \Delta_{j,x}.
\end{align}
The update rule for the next step is then calculated as:
\begin{align} \label{delta_J}
  \Delta_{j,x\!+\!1}
  & =\frac{ {\bf{h}}_{j} {\bf{P}} {\bf{M}}_{x\!+\!1}^{-2} {\bf{P}}^H {\bf{h}}_{j}^H }{ {\sigma_c^2} + {\bf{h}}_{j} {\bf{P}} {\bf{M}}_{x\!+\!1}^{-1} {\bf{P}}^H {\bf{h}}_{j}^H  },  
\end{align}
where 
$
    {\bf{M}}_{x\!+\!1}^{-1} = {\bf{M}}_{x}^{-1} -
    \frac{  {\bf{M}}_{x}^{-1} {\bf{P}}^H {\bf{h}}_{J_{x}}^H {\bf{h}}_{J_{x}} {\bf{P}}  {\bf{M}}_{x}^{-1}  }{  {\sigma_c^2} + {\bf{h}}_{J_{x}} {\bf{P}} {\bf{M}}_{x}^{-1} {\bf{P}}^H {\bf{h}}_{J_{x}}^H  }.
$.
\begin{comment}

To reduce the number of computations, the following update rule is proposed. 
Let ${\bf{m}}_{x\!+\!1}=\frac{  {\bf{M}}_{x}^{-1} {\bf{P}}^H {\bf{h}}_{J_{x}}^H   }{ {\sqrt{ {\sigma_c^2} + {\bf{h}}_{J_{x}} {\bf{P}} {\bf{M}}_{x}^{-1} {\bf{P}}^H {\bf{h}}_{J_{x}}^H }} }$, then we have
\begin{align}
    {\bf{h}}_{j} {\bf{P}} {\bf{M}}_{x\!+\!1}^{-1} &=  {\bf{h}}_{j} {\bf{P}} {\bf{M}}_{x}^{-1} - {\bf{h}}_{j} {\bf{P}}{\bf{m}}_{x\!+\!1}{\bf{m}}_{x\!+\!1}^H \nonumber \\
     {\bf{h}}_{j} {\bf{P}} {\bf{M}}_{x\!+\!1}^{-1} {\bf{P}}^H {\bf{h}}_{j}^H &= {\bf{h}}_{j} {\bf{P}} {\bf{M}}_{x}^{-1} {\bf{P}}^H {\bf{h}}_{j}^H \!-\! {\bf{h}}_{j} {\bf{P}}  {\bf{m}}_{x\!+\!1}{\bf{m}}_{x\!+\!1}^H {\bf{P}}^H {\bf{h}}_{j}^H
\end{align}
The decremental values of all the candidate rows are simply calculated as 
\begin{align}
  \Delta_{j,x\!+\!1}
  & =\frac{ ||  {\bf{h}}_{j} {\bf{P}} {\bf{M}}_{x}^{-1} - 
 {{\xi}_{j,x+1}} {\bf{m}}_{x\!+\!1}^H||_2^2   }{  {\sigma_c^2}+ {\bf{h}}_{j} {\bf{P}} {\bf{M}}_{x}^{-1} {\bf{P}}^H {\bf{h}}_{j}^H \!-\! {{\xi}_{j,x+1}}^2 },   \label{new_update}
\end{align}
where ${{\xi}_{j,x+1}}={\bf{h}}_{j} {\bf{P}}{\bf{m}}_{x\!+\!1}$.
\end{comment}

Finally, the high-quality sub-channel matrix contributing most to the objective function is found after $a$ selection steps, where the selected sub-channel index corresponds to the mode index in the connection mode on RDARS.

%\vspace{-12pt}
\subsection{Beamforming Design}
With the fixed mode selection matrix ${\bf{A}}$, the phase coefficients optimization problem becomes
  \begin{align}
  (\text{P10}): \ \min_{ {\bf{\Phi}} } \quad  \!\!   f_{\rm MSE}^{app}   \qquad \mbox{s.t.}  \eqref{P1_RIS}.
  \end{align} 
Noting the problem (P10) has the same structure as the ${\bm{\theta}}$-subproblem in Section III.B, the proposed MM technique for solving ${\bm{\theta}}$-subproblem can be applied to address the problem (P10) with semi-closed-form solutions.
Due to length limitations, we omit this procedure for simplicity. 

The procedure for the proposed low-complexity AO algorithm is meticulously outlined in {\textbf{Algorithm~\ref{greedy search}}}. 
The computational complexity of the greedy-search-based mode selection is estimated to be $O(aNM^3)$ \cite{Gao2018GS}. On the other hand, the MM-based beamforming optimization incurs a complexity of $O(N^2M+N_r^3)$. Consequently, the aggregate complexity of {\textbf{Algorithm~\ref{greedy search}}} can be approximated as $O(I_{AO}(aNM^3 + N^2M+N_r^3))$, with $I_{AO}$ denoting the number of iterations.
The complexity of the greedy search-based AO algorithm is substantially lower than the inexact BCD-based PDD algorithm, as observed from the complexity analysis.
In conclusion, {\textbf{Algorithm~\ref{greedy search}}} offers a computationally efficient solution to tackle the intractable optimization problem. Such integration of the greedy-search-based mode selection and the MM-based beamforming optimization achieves an acceptable performance while upholding a manageable computational complexity.

\begin{algorithm} \label{greedy search}
    \normalsize
      \caption{Low-Complexity AO Algorithm }
     \SetKwInOut{Input}{Input}
     \SetKwInOut{Output}{Output}
     \Input{System parameters  $M,N_r,N,a$, the threshold $\epsilon$, etc.}
     \Output{${\mathcal{A}}^{\star}$, ${\bf{W}}^{\star}$  and RIS phase shifts ${\bm{\Phi}}^{\star}$.}
     Set the initial point $ ({\bf{A}}^{0},{\bf{W}}^{0},{\bm{\Phi}}^{0})$.\\ 
     % update ${\bf{M}}_0^{-1}$, ${\mathcal{N}=\{1,2,...,N}\}$\\
      \Repeat{ $| f_{\rm MSE}^{app}({\mathcal{A}}_{k},{\bm{\theta}}_{k}) - f_{\rm MSE}^{app}({\mathcal{A}}_{k\!-\!1},{\bm{\theta}}_{k\!-\!1}) | \leq \epsilon $ }
      {

        update ${\bf{M}}_0^{-1}$, $\mathcal{N}=\{1,2,...,N\}$, \\
        \Repeat{$x=1:a$}
        {
        update $J_{x}$ in \eqref{J}, \\
        update ${\mathcal{N}=\mathcal{N}-\{ J_{x} \}}$, \\
        for $j \in \mathcal{N} $ \\
        \quad 
        update ${\bm{\Delta}}_{j,x+1}$ via \eqref{delta_J}, \\
        %\quad update ${\bf{M}}_{x\!+\!1}^{-1} $ via \eqref{M_x+1}, \\
        end \\
        
        } 
       update ${\mathcal{A}}_{k}=\{1,2,...,N\}-\mathcal{N}$, \\ 
       update ${\bm{\theta}}_{k}$ by solving problem (P9),
      }
      Return ${\bf{W}}^{\star}$ in \eqref{W_opt}, ${\bf{\Phi}}^{\star}={\rm diag}({\bm{\theta}}_{k})$, ${\mathcal{A}}^{\star}={\mathcal{A}}_{k}$.      
  \end{algorithm}

%\subsubsection{Discussion}

%\vspace{-24pt}
\section{Simulations}
In this section, we provide numerical simulations to evaluate the performance of all proposed algorithms for the RDARS-aided communication system. Unless otherwise stated, we set the basic system parameters as follows: $N_r=4$, $M=4$, $N=256$. 
{\color{black} Under a three-dimensional deployment setup as shown in Fig.~\ref{system topologogies}, the coordinate of the BS is set as $(0,200,5)$m and the users are randomly distributed in a circle with a radius of $10$m and a center of $(0,0,1.5)$m.
The location of RDARS is $(30,d_{R},15)$m, where $d_{R}$ denotes the distance of the RDARS along the y-axis.  }
The path loss model at 3GPP \cite{cho2010mimo} is considered, i.e., $\text{PL}(d)\!=\!\beta_0 \Psi_a{(d)}^{-{\alpha}_{p}}$,  where  $d$  and ${\alpha}_{p}$ respectively denote the propagation distance and the path loss exponent. $\beta_0$ stands for the path loss at the reference distance 1m and $\Psi_a$ denotes the random shadowing impact subject to the 
{\color{black} Gaussian distribution with zero mean and a standard derivation of $\tilde \sigma$. We assume $\beta_0\!=\!-30$dB and $\Psi_a \sim \mathcal{N}(0,{\tilde \sigma}^2 )$ with  ${\tilde \sigma}=5.8$dB in this simulation. }
In particular,  we set ${\alpha}_{p}^{RB}=2.2$,  ${\alpha}_{p}^{UR}=2.2$ and ${\alpha}_{p}^{UB}=3.5$ for the BS-RDRAS channel ${\bf H}_{RB}$, the RDRAS-user channels ${\bf H}_{UR}$ and BS-user channels ${\bf H}_{UB}$, respectively.  

{\color{black} As for the small-scale fading, we assume the spatially correlated Rician fading channels, modeled as
\begin{align}
    {\bf{H}} = \sqrt{ \kappa_t }  {\bf{H}}^{\rm LoS} +  \sqrt{ 1-\kappa_t }  {\bf{H}}^{\rm NLoS}, \label{Rician_channel}
\end{align}
where ${\bf{H}}^{\rm LoS}$ and ${\bf{H}}^{\rm NLoS}$ represent the deterministic LoS and spatial non-LoS (NLoS) components, respectively.
$\kappa_t$ in \eqref{Rician_channel} represents the normalized Rician factor belonging to $[0,1]$. With $\kappa_t$ increasing from 0 to 1, the channel ${\bf{H}}$ transitions from a Rayleigh fading channel to an LoS channel.

The deterministic LoS component is determined by the antenna array response vectors, i.e., ${\bf{h}}_{UB,m}^{\rm LoS}={\bf{a}}_B(\psi_{UB,m})$, ${\bf{G}}^{\rm LoS}={\bf{a}}_{R}(\psi_A,\vartheta_A) {\bf{a}}_B(\psi_{RB})^H$ and ${\bf{h}}_{UR,m}^{\rm LoS}={\bf{a}}_R(\psi_{D,m},\vartheta_{D,m})$, where ${\bf{a}}_B$ and ${\bf{a}}_{R}$ denote the array response vectors at the BS and the RDARS, respectively. $\psi_{UB,m}$ and $\psi_{RB}$ are the angles of departure (AoDs) at the BS towards the $m$-th user and the RDARS, respectively. Furthermore, $\psi_A$ and $\vartheta_A$ denote the azimuth and elevation angles of arrival (AoAs) at the RDARS, respectively, whereas  $\psi_D$ and $\vartheta_D$ stand for the AoDs. 

The NLoS component ${\bf{H}}^{\rm NLoS}$ is assumed to be correlated Rayleigh fading, modeled as 
\begin{align}
    {\bf{H}}^{\rm NLoS} = {\bf{R}}^{\frac{1}{2}}{\bf{W}}{\bf{T}}^{\frac{1}{2}}
\end{align}
where ${\bf{W}}$ denotes the random components of each channel with independent and identically distributed (i.i.d.) zero-mean unit covariance complex Gaussian random elements.
${\bf{R}}$ and ${\bf{T}}$ are deterministic nonnegative semi-definite receive and transmit spatial correlation matrices, respectively. Specifically, ${\bf{H}}^{\rm NLoS}$ becomes the conventional Rayleigh fading channel when ${\bf{R}}={\bf{T}}={\bf{I}}$.
We adopt the exponential correlation model to characterize the spatial correlation among the antennas \cite{R1,R2}. For the ULA topology at the BS, the (i,j)-th element of correlation matrix $R_{B}$ is given as $R_{B}(i,j)=\beta_{R_{B}}^{|i-j|}$. Regarding the spatial transmit and receive correlation matrices ${T}_{R}$ and ${R}_{R}$ at the RDARS, we apply the Kronecker product of the vertical correlation matrix ${R}_{v}$ and the horizontal correlation matrix ${R}_{h}$ to the correlation model and assume ${T}_{R}={R}_{R}$, i.e., ${T}_{R}={R}_{R}={R}_{v}\otimes {R}_{h}$ \cite{R3,R4}. Using the exponential correlation model, the element in ${R}_{v}$ and ${R}_{h}$ are given as $R_{v}(i,j)=\beta_{R_{v}}^{|i-j|}$ and $R_{h}(i,j)=\beta_{R_{h}}^{|i-j|}$, respectively. $\{ \beta_{R_{B}}, \beta_{R_{v}}, \beta_{R_{h}} \}$ denotes the correlation coefficients between any two adjacent antennas or elements at the BS and the RDARS. 
The correlation coefficients at the RDARS are set as $\beta_{R_{v}}=\beta_{R_{h}}=\beta_{c}=0.5$. In contrast, the correlation coefficient at the BS is set as $\beta_{R_{B}}=0$ unless otherwise stated.
}

{\color{black} Specifically, we assume the channel between users and the BS ${\bf{H}}_{d}$ is assumed to be Rayleigh fading with Rician factor $\kappa_t$ being 0, while the RDARS-related channels ${\bf{H}}_{r}$ and ${\bf{G}}$ obey Rician fading with Rician factor $\kappa_t$ being 0.75 in this simulation unless otherwise stated.
The noise powers $\sigma_b^2$ and $\sigma_c^2$ are given by $\sigma_b^2=\sigma_c^2=\sigma^2=-90$dBm.
The average normalized MSE (ANMSE) denoted by  $ {\rm ANMSE}\triangleq {\rm Tr}({\bf MSE})/M$  is adopted as the performance metric.  
All simulation results are obtained by averaging over $300$ channel realizations. }

    \begin{figure}[t]
        \centering  %图片全局居中
        \includegraphics[width=0.45\textwidth]{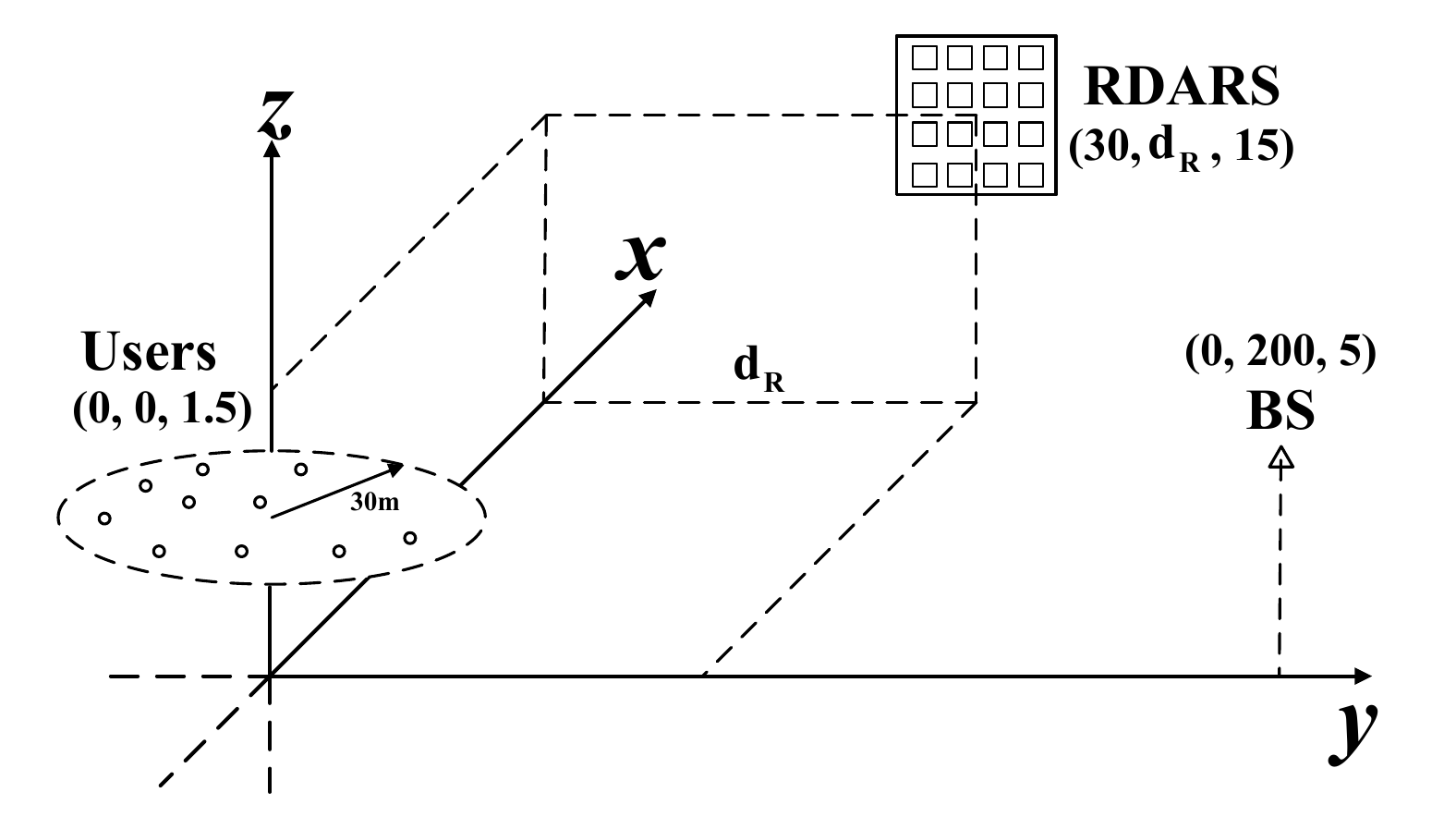}
        \caption{Three-dimensional coordinates of the system deployment for simulations.}
        \label{system topologogies}
    \end{figure}

    \begin{comment}
    \begin{figure}[t]
        \centering  %图片全局居中
        \vspace{-12pt}
        \subfigure[Topology 1.]{
        \includegraphics[width=0.5\textwidth]{Topology1.pdf}}
        \subfigure[Topology 2.]{
        \includegraphics[width=0.5\textwidth]{Topology2.pdf}}
        \caption{Three-dimensional coordinates of the system deployment for simulations.}
        \vspace{-12pt}
        \label{system topologogies}
    \end{figure}
    \end{comment}

For the purpose of comparison, the following schemes are utilized in the simulation:
1) {\textbf{Passive RIS}}: This scheme represents the conventional fully passive RIS, serving as the benchmark. The number of passive elements in the RIS is the same as that in the RDARS system.
2) {\textbf{DAS}}: Similar to the passive RIS, the DAS system is employed as a comparative scheme, where the number of distributed antennas matches the number of connected elements in the RDARS system.
3) {\textbf{Random Index}}:  In this scheme, the indices of elements in the connection mode on RDARS are randomly selected, with optimized phase coefficients assigned to the passive elements.
4) {\textbf{Fixed Index}}: Similarly, in the ``Fixed Index" scheme, the first $a$ elements are designated in the connection mode.
5) {\textbf{GS-AO}}: The ``GS-AO" method refers to the proposed algorithm outlined in Section IV, which alternately optimizes the mode selection matrix using the greedy search approach and the phase coefficients using the MM technique.
6) {\textbf{IBCD-PDD}}: The proposed inexact BCD-based PDD algorithm is referred to as ``IBCD-PDD".

\subsection{Convergence.}
The convergence behaviors of the proposed PDD algorithm under various initialization schemes are depicted in Fig.~\ref{PDD_converve}.
The initial mode selection index is chosen as the ``Random Index" in initialization 1 and the ``Fixed Index" in initialization 2. Both initialization schemes adopt random phase coefficients. 
Fig.~\ref{PDD_converve} (a) illustrates the constraints violation during the PDD iteration.  
%The constraint violation is forced to approach the predefined accuracy as the penalty increases with the iteration number. 
As the number of iterations increases, the penalty is adaptively adjusted to gradually force the constraint violation to approach the predefined accuracy.
Regardless of the initial conditions or parameter settings, the PDD algorithm consistently guarantees feasibility.
From Fig.~\ref{PDD_converve} (b), the PDD algorithm exhibits monotonically convergent behavior, with the ANMSE value converging towards a maximum value. This can be attributed to the relatively small initial penalty. Thus, the solution obtained in the first few iterations fails to satisfy the equality constraints, ultimately leading to the incremental behavior. 
Different initialization schemes can significantly influence the effectiveness of convergence, as an appropriate initial point plays a pivotal role in achieving optimal outcomes.
%it is observed that the PDD algorithm is monotonically convergent with the MSE converging to a maximum value. The reason is the initial penalty is relatively small, and the solution obtained by the PDD method does not satisfy the equality constraints, thus ressulting the incremental behavior.

    \begin{figure}[t]
    %\vspace{-24pt}
        \centering  %图片全局居中
        \subfigure[$a$]{
        \includegraphics[width=0.24\textwidth]{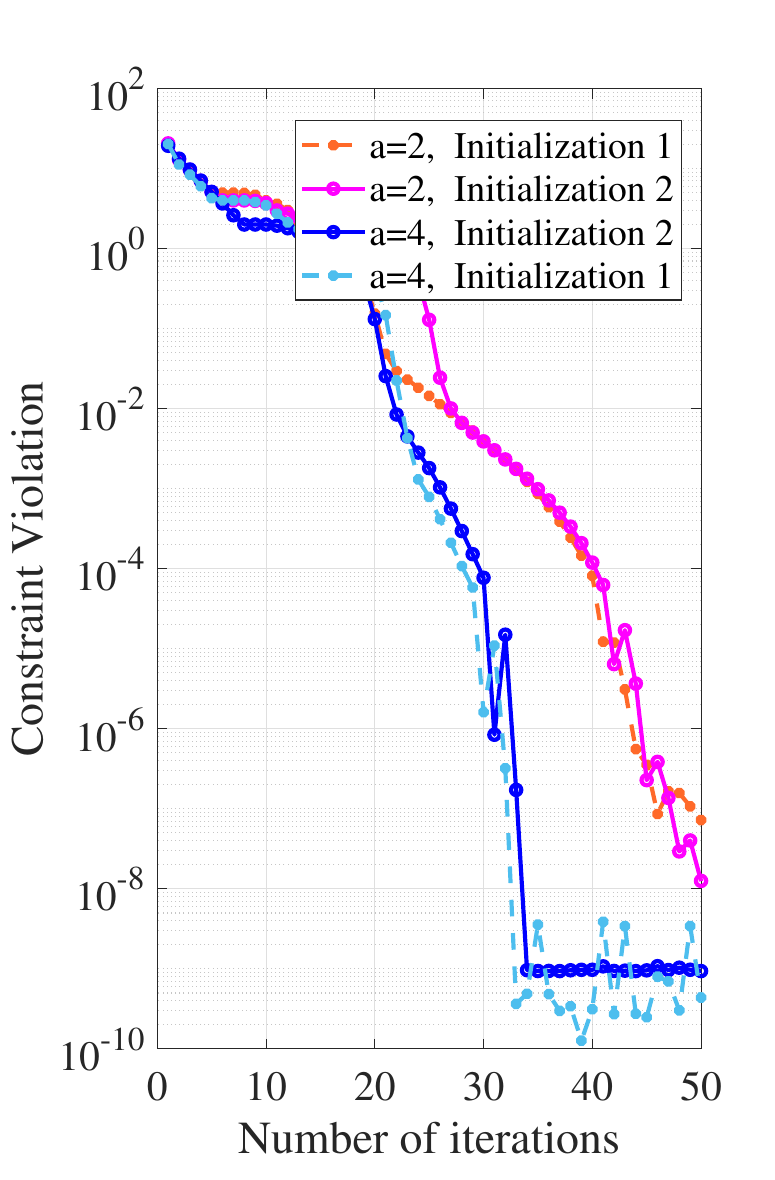}}
        \hspace{-6.5mm}
        \subfigure[$b$]{
        \includegraphics[width=0.24\textwidth]{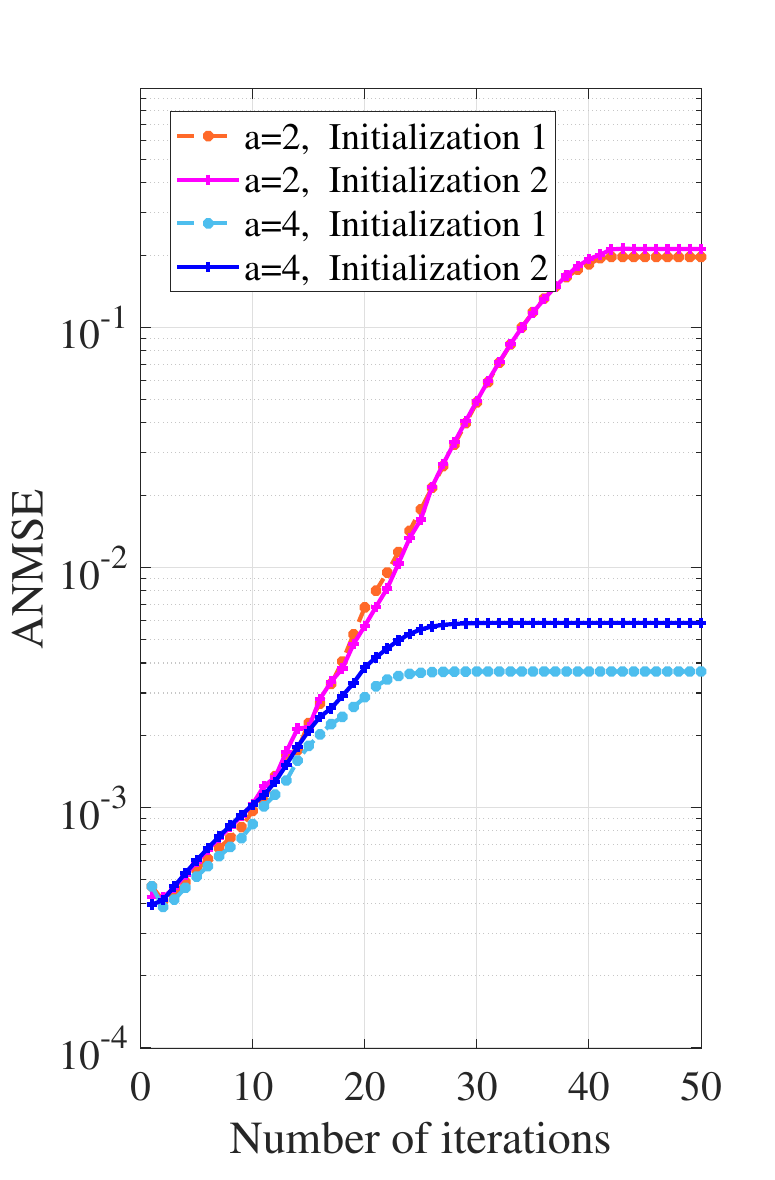}}
        \caption{Convergence behaviors of the PDD algorithm under different initialization schemes.}
        \vspace{-12pt}
        \label{PDD_converve}
    \end{figure}
    
    \begin{figure}[t]
        \centering  %图片全局居中
        \includegraphics[width=0.45\textwidth]{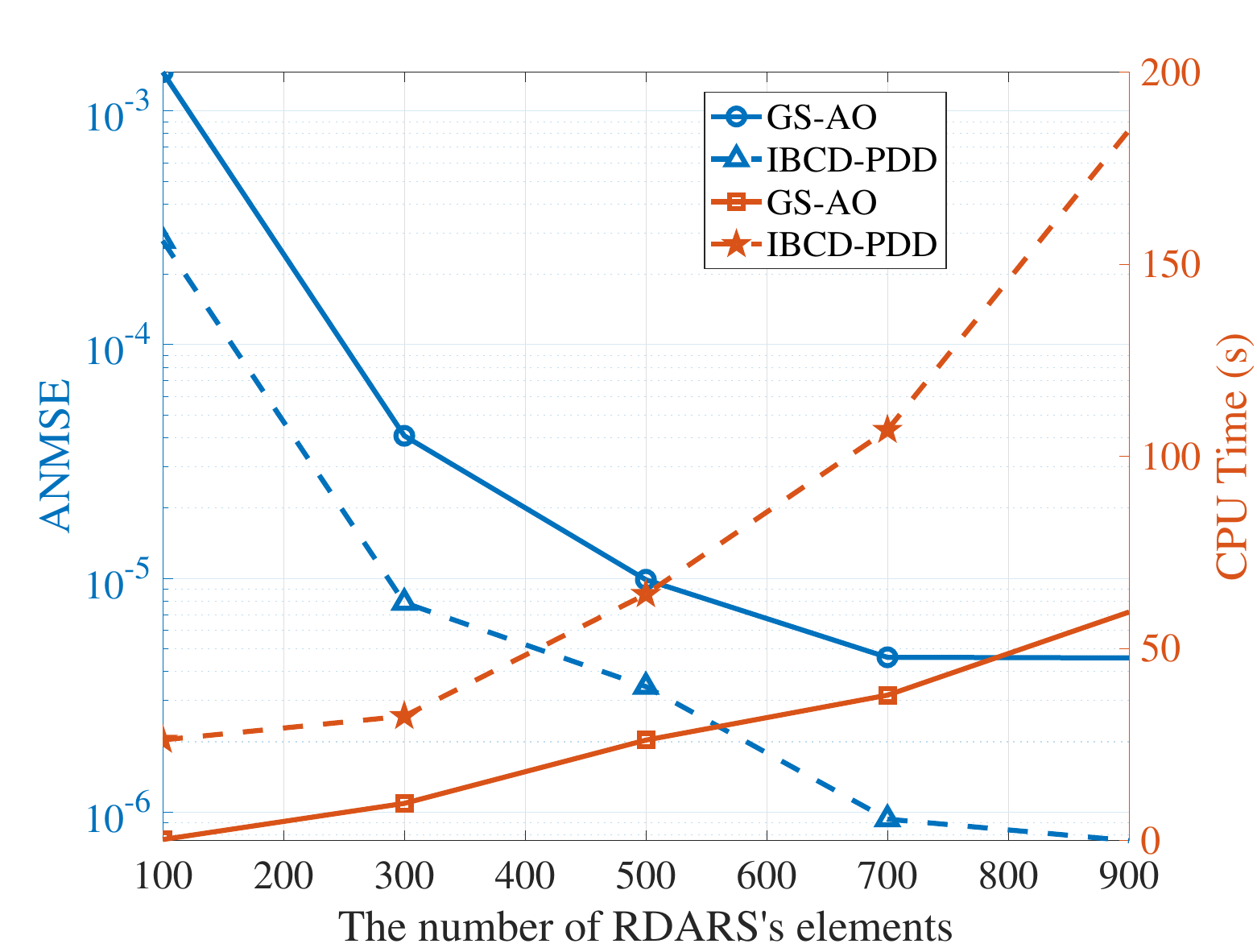}
        \caption{Performance-complexity trade-off between IBCD-PDD and GS-AO methods.}
        \label{tradeoff}
    \end{figure}
%% Figure 
    \begin{figure}[t]
        \centering  %图片全局居中
        \includegraphics[width=0.45\textwidth]{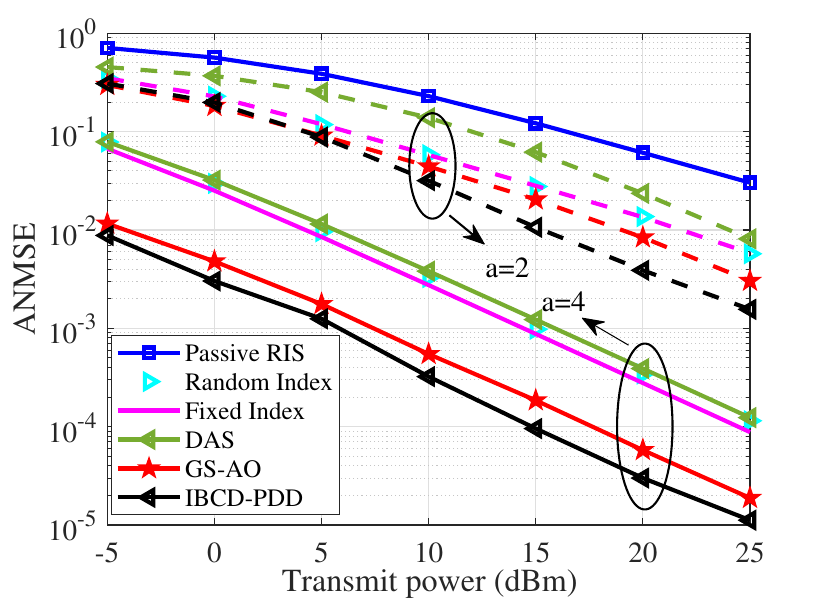}
        \caption{ANMSE vs the transmit power for different algorithm comparisons with $a=2$ or $a=4$ when $N=256$.}
        \vspace{-12pt}
        \label{RDARS_VERSUS_P}
    \end{figure}

{\color{black}    
\subsection{Performance-Complexity Trade-off.}
Fig.~\ref{tradeoff} illustrates the performance-complexity trade-off between the proposed IBCD-PDD and GS-AO methods.
The IBCD-PDD method demonstrates superior ANMSE performance over the GS-AO method as the scale of RDARS increases. 
However, the CPU time required by the IBCD-PDD method increases sharply as the number of RDARS elements $N$ increases, which is always higher than that of the GS-AO method.
This pronounced difference is because the computational complexity of the IBCD-PDD method scales with $N$ to the 3.5th power, in contrast to the quadratic scaling of the GS-AO method. %Specifically, the CPU time of the IBCD-PDD method is approximately three times that of the GS-AO method when $N$ equals 900.
Therefore, the superior performance of the IBCD-PDD comes at the cost of increased computational complexity when compared to the GS-AO method. This relationship underscores the necessity for a judicious selection of the method based on the specific requirements and constraints of the application at hand, such as the scale of the RDARS system and the available computational resources.
}

\vspace{-8pt}   
\subsection{ANMSE vs $P$.}
Fig.~\ref{RDARS_VERSUS_P} shows the ANMSE performance of all considered schemes as a function of the transmit power. The RDARS-aided MIMO system is equipped with a total of 256 elements on RDARS, and either 2 or 4 elements operating in the connection mode.
Firstly, it is evident that both the ``Random Index" and ``Fixed Index" schemes outperform the DAS or the passive RIS-aided systems. This observation underscores the inherent advantages of the RDARS's structure, even without the mode selection optimization. Interestingly, the ``Random Index" and ``Fixed Index" schemes exhibit similar performance, which aligns with previous findings in RIS-aided systems employing random and identity phase coefficients \cite{ZhaoxinMSE2022}. Although individual channel realizations may yield varying performance for the random index scheme as compared to the fixed index scheme, such performance fluctuations tend to cancel out in Monte Carlo simulations, resulting in similar overall performance. 
These performance improvements can be attributed to the additional but non-optimized connected links on RDARS.
Furthermore, the proposed ``IBCD-PDD" algorithm is observed to have the best ANMSE performance, demonstrating its effectiveness. Notably, the performance gain of the ``IBCD-PDD" scheme exhibits substantial growth when the number of connected elements increases from 2 to 4. Additionally, the ``GS-AO" algorithm shows promise in approaching a high-quality sub-optimal solution with low complexity.
The performance gap between the proposed ``IBCD-PDD" algorithm and the ``Random Index/Fixed Index" scheme highlights the selection gain originating from the mode selection optimization. This gap becomes more significant as the number of connected elements increases, emphasizing the importance of leveraging the additional DoF offered by the channel-aware placement of elements in the connection mode.
    \begin{figure}[t]
    %\vspace{-24pt}
        \centering  %图片全局居中
        \includegraphics[width=0.45\textwidth]{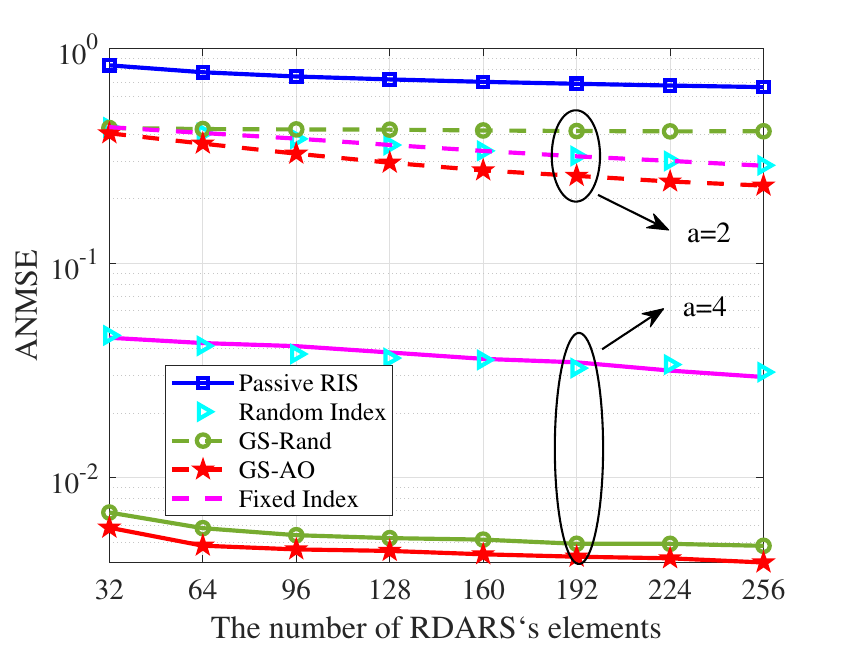}    
        \caption{ANMSE vs the number of RDARS's elements for different algorithm comparisons when $a=2$ and $a=4$.}
        \label{RDARS_VERSUS_N}
        %\vspace{-12pt}
    \end{figure}
    
\begin{comment}
    \begin{figure}[t]
        \centering  %图片全局居中
        \subfigure[$a=2$]{
        \includegraphics[width=0.5\textwidth]{RDARS_VERSUS_N_M4a2T500.pdf}}
        \subfigure[$a=4$]{
        \includegraphics[width=0.5\textwidth]{RDARS_VERSUS_N_M4a4T300.pdf}}
        \caption{ANMSE vs the number of RDARS's elements for different algorithm comparisons when (a) $a=2$ (b) $a=4$.}
        \label{RDARS_VERSUS_N}
    \end{figure}    
\end{comment} 

\vspace{-8pt}
\subsection{ANMSE vs $N$.}
The relationship between the ANMSE performance and the number of RDARS elements for $a=2/4$ under different algorithm setups is illustrated in Fig.~\ref{RDARS_VERSUS_N}. The ``GS-Rand" scheme represents a greedy search-based mode selection scheme without phase optimization.
Firstly, it is worth noting that all RDARS schemes outperform the conventional RIS-aided systems, even with a small number of elements in the connection mode. This superiority is further enhanced as the value of $a$ increases from 2 to 4.
Furthermore, the performance gap between the ``GS-AO" and the ``GS-Rand" schemes represents the reflection gain achieved through phase optimization of the passive elements on RDARS. This gap becomes more prominent as the number of RDARS elements increases when $a=2$. However, when $a=4$, this gap remains relatively stable due to the gradual dominance of the distribution gain over the reflection gain.
In addition, both the ``Random Index" and the ``Fixed Index" schemes outperform the ``GS-Rand" scheme when $a=2$, since the reflection gain takes precedence over the distribution gain and selection gain. However, as the value of $a$ increases to 4, the distribution gain and selection gain start to outweigh the reflection gain.
Finally, the proposed ``GS-AO" algorithm achieves the best performance by effectively integrating the distribution gain, reflection gain and selection gain in both cases.
In summary, these findings underscore the advantages of RDARS systems and emphasize the significance of striking a balance between the distribution gain and reflection gain for optimal performance.

    \begin{figure}[t]
    %\vspace{-24pt}
        \centering  %图片全局居中
        \includegraphics[width=0.45\textwidth]{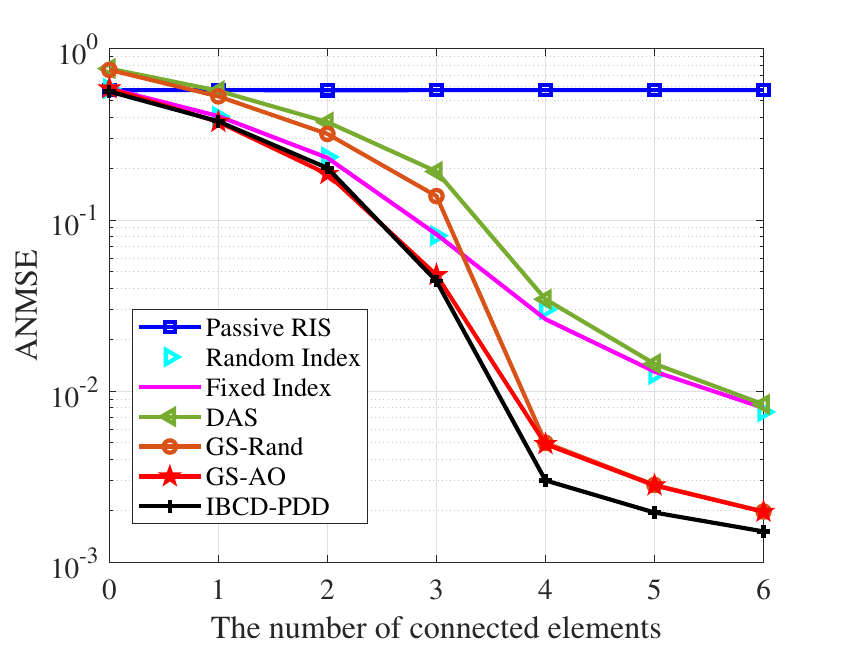}
        \caption{ANMSE vs the number of connected elements for different algorithm comparisons when $N=256$.}
        \vspace{-12pt}
        \label{RDARS_VERSUS_A}
    \end{figure}

        \begin{figure}[t]
        \vspace{-6pt}
        \centering  %图片全局居中
        \subfigure[Topology 1]{
        \includegraphics[width=0.24\textwidth]{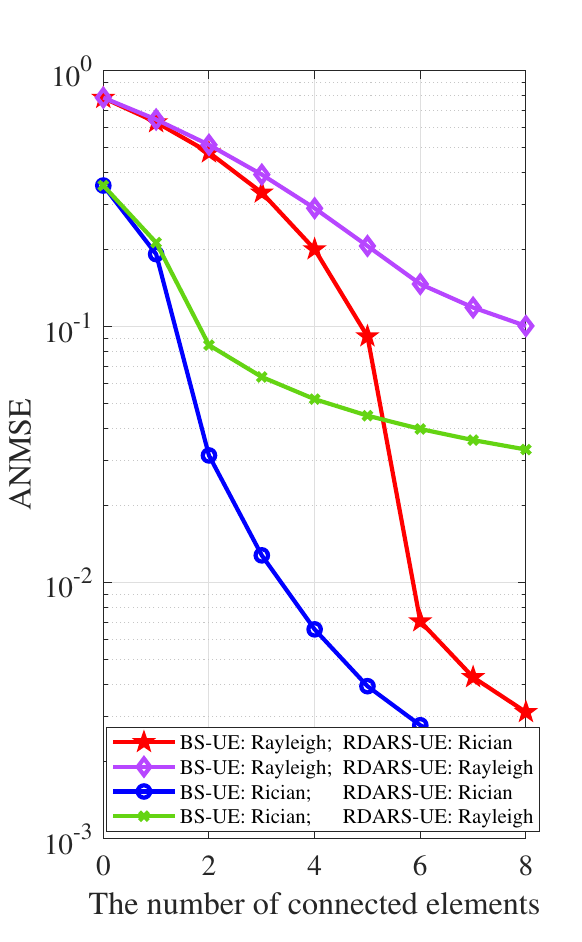}}
        \hspace{-6.5mm}
        \subfigure[Topology 2]{
        \includegraphics[width=0.24\textwidth]{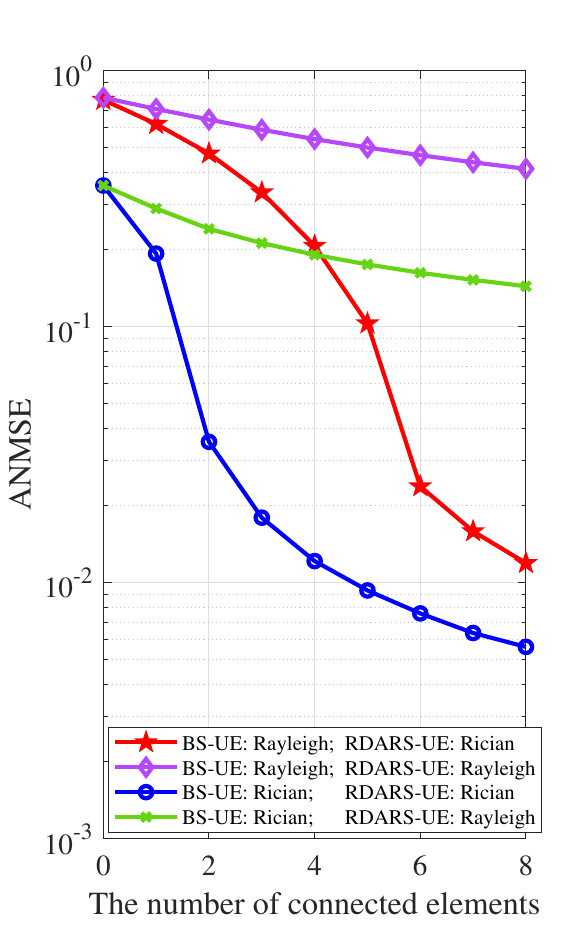}}
        \caption{ANMSE vs the number of connected elements under various channel conditions when $N_r=4,M=6$ in (a) Topology 1 (b) Topology 2.}
        \vspace{-12pt}
        \label{RDARS_TOPO}
    \end{figure}

\vspace{-8pt}
\subsection{ANMSE vs $a$.}
To further harness the distribution gain in the RDARS, we investigate the ANMSE performance versus the number of elements in the connection mode for different algorithms with $N=256$, as depicted in Fig.~\ref{RDARS_VERSUS_A}. 
When $a$ is set as 0, the RDARS-aided systems reduce to the conventional RIS-aided systems, with ``DAS" being equivalent to the traditional MIMO system. In this scenario, all schemes, including ``Passive RIS/GS-AO/Fixed Index/Random Index/IBCD-PDD", outperform the ``GS-Rand" scheme due to the optimization of phase coefficients. ``GS-Rand" performs better than ``DAS" because of the reflection elements.
As $a$ increases, the  ``Passive RIS" scheme remains unchanged, while the other schemes exhibit improving performance due to the distribution gain. Furthermore, as $a$ increases, the mode selection becomes more significant, resulting in a larger performance gap between the ``IBCD-PDD" and ``Fixed Index/Random Index" schemes. Additionally, the performance improvement for the ``IBCD-PDD" algorithm becomes less significant when $a$ exceeds a certain threshold. While it still contributes to the enhancement of ANMSE performance, the gains gradually saturate. 

To determine the suitable value for the number of elements in the connection mode in terms of the hardware complexity and performance gain, we analyze the ANMSE performance versus the number of elements in the connection mode under various system typologies and channel conditions, as shown in Fig.~\ref{RDARS_TOPO}. We consider four channel conditions based on the presence or absence of LoS components in the BS-UE and RDARS-UE channels. {\color{black} In topology 1, the RDARS is located in the middle of the BS and users with $d_R=100$, while in topology 2, it is closer to the BS with $d_R=200$.}
We clearly find that the performance gain of the proposed scheme is not significant as $a$ increases in both topologies when the RDARS-UE channel lacks LoS components. However, when the LoS paths exist between the RDARS and users, the RDARS architecture exhibits substantial performance improvement as $a$ increases.
Furthermore, the slope of MSE performance initially increases and then decreases as the number of elements in the connection mode increases when the deterministic LoS paths exist in the RDARS-UE channel. This suggests an optimal number of connected elements for maximizing the slope of performance improvement exists.
Specifically, when the BS-UE channel follows Rayleigh fading, the slope becomes larger when $a\leq6$ and then decreases when $a>6$, where the transformation threshold is equal to the number of users. However, under the condition of the BS-UE channel subject to Rician fading, this slope changes at $a=2$, which equals the difference between the number of receive antennas at the BS and the number of users. 
In conclusion, the optimal number of elements in the connection mode on RDARS depends on the channel conditions, the number of users, and the number of receive antennas at the BS. These findings provide crucial insights when selecting the appropriate value of connected elements in the practical implementation of RDARS-aided systems.

%% Figure 
    \begin{figure}[t]
    %\vspace{-12pt}
        \centering  %图片全局居中
        \includegraphics[width=0.45\textwidth]{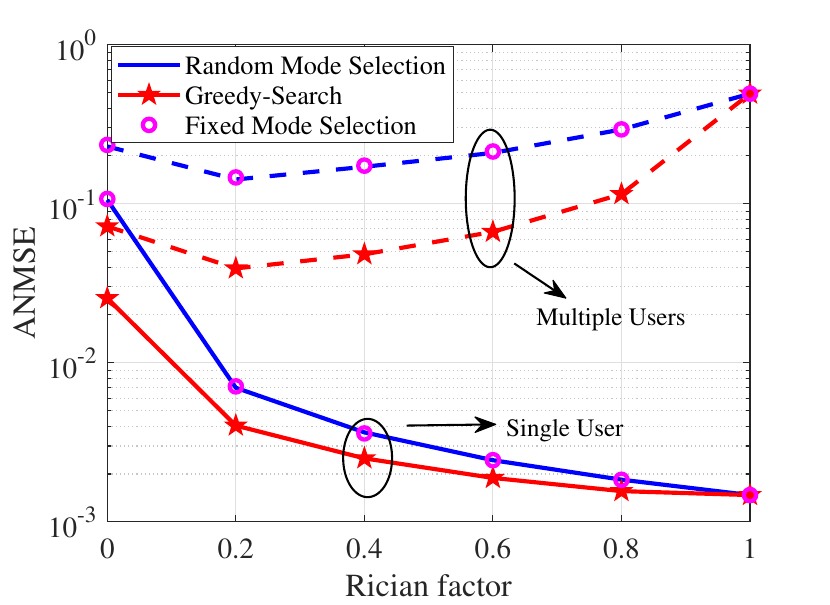}
        \caption{ANMSE vs the  Rician factor under the case of a single user and multiple users for different index selection schemes.}
        \vspace{-12pt}
        \label{RDARS_Ricianfactor}    
    \end{figure}
    
%\vspace{-8pt}
\subsection{ANMSE vs $\kappa_t$.}
In Fig.~\ref{RDARS_Ricianfactor}, we explored the ANMSE performance versus the Rician factor $\kappa_t$ under different mode selection schemes with the same phase coefficients, considering both single-user and multiple-user scenarios. 
In the single-user case, as the Rician factor $\kappa_t$ increases from 0 to 1, the channel transitions from Rayleigh fading to a LoS channel. Consequently, the ANMSE performance is gradually improving. Additionally, as the Rician factor grows, the performance gap between the optimized mode selection scheme and the random/fixed mode selection schemes diminishes. This implies that the optimization of the mode selection becomes less critical as the strength of the LoS component increases.
On the other hand, the ANMSE performance in the case of multiple users does not exhibit a monotonic relationship with the Rician factor due to the increased interference among multiple users as the LoS components become more dominant. The presence of stronger LoS components will lead to more significant interference, impacting the overall performance.
Lastly, when considering only LoS components ($\kappa_T=1$), both the optimized mode selection scheme and the random/fixed mode selection scheme achieve the same ANMSE performance. This finding aligns with Proposition~\ref{pro_LoS}, as stated in Section III, suggesting the mode selection matrix ${\bf{A}}$ can be arbitrarily chosen when only LoS channels are present. 
In conclusion, the ANMSE performance in the RDARS is influenced by the Rician factor, with different trends observed in single-user and multiple-user scenarios. The necessity for mode selection optimization diminishes when only LoS components are present. These insights are valuable for guiding the practical implementation of RDARS systems.

{\color{black} 
\subsection{ANMSE vs $\beta_{c}$.}
Considering such spatially correlated channels, we have studied the ANMSE performance as a function of the number of elements $N$ in the RDARS under various levels of spatial correlation, as illustrated in Fig.~\ref{RDARS_VERSUS_Correlation}.
\begin{figure}[t]
        \centering  %图片全局居中
        \includegraphics[width=0.45\textwidth]{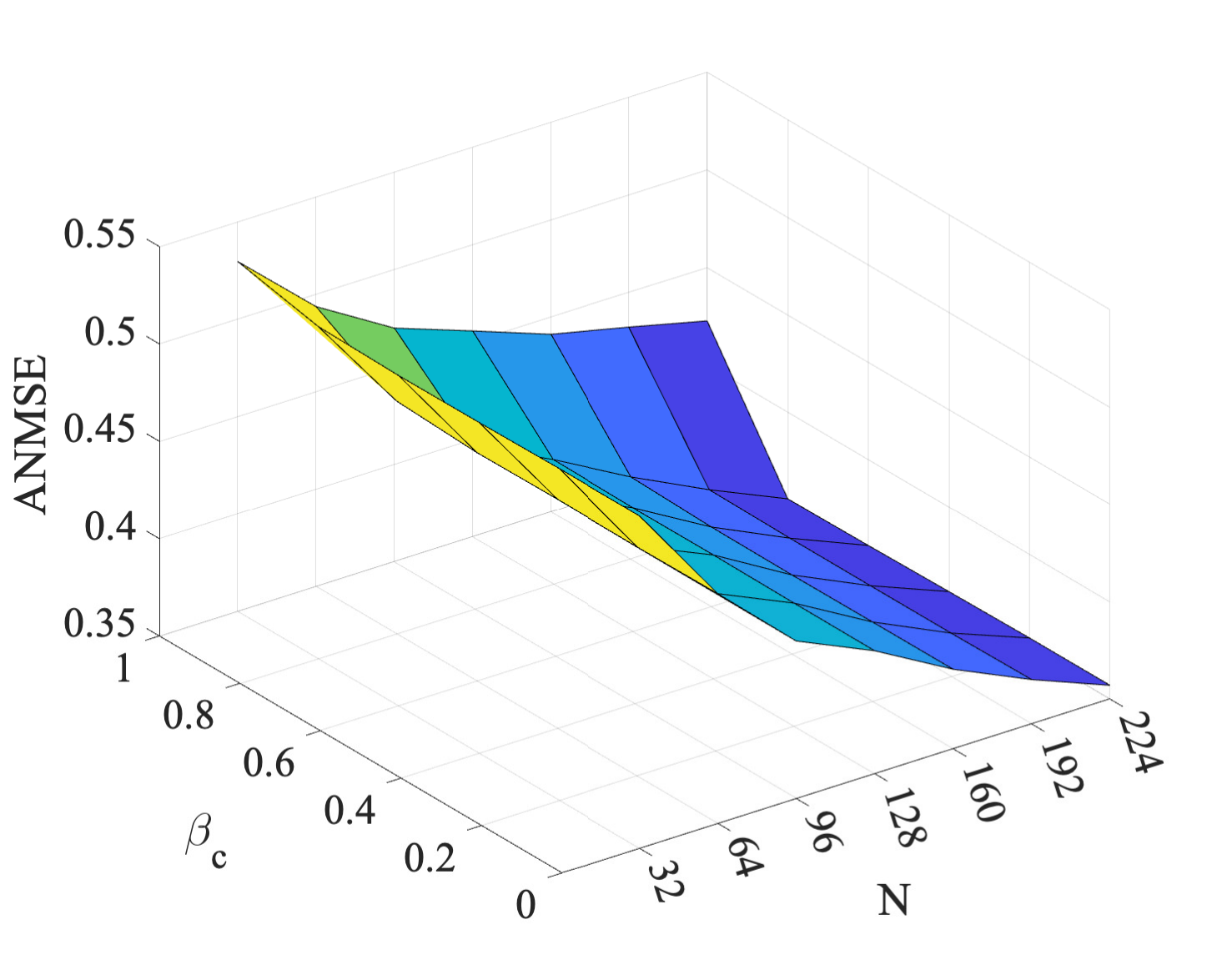}    
        \caption{The ANMSE versus the number of elements $N$ in the RDARS under various levels of spatial correlation.}
        \label{RDARS_VERSUS_Correlation}
        \vspace{-12pt}
    \end{figure}
It can be observed from Figs.~\ref{RDARS_VERSUS_Correlation} that an increase in the number of RDARS elements 
$N$ enhances the MSE performance. However, the spatial correlation at the RDARS negatively affects this performance.
As the correlation coefficient $\beta_{c}$ varies from $0$ to $1$, the channels associated with the RDARS evolve from uncorrelated Rician channels to fully correlated ones.
With an increasing level of spatial correlation, the rank of the channels related to the RDARS decreases. This reduction consequently lowers the spatial multiplexing gain and intensifies multiuser interference, leading to a degradation in performance.}

{\color{black}
\subsection{ANSME vs the RDARS's Position.}
Fig.~\ref{_Distance} illustrates the ANMSE performance as a function of the horizontal distance of the RDARS, i.e., $d_{R}$, under both single-user and multi-user scenarios. 
The conventional passive RIS and DAS are compared as benchmarks against the same three-dimensional setup as depicted in Fig.~\ref{system topologogies}. 
In the single-user system shown in Fig.~\ref{_Distance} (a), the passive RIS achieves optimal performance when located near the user or the BS, while the performance of the DAS deteriorates as it approaches the BS.
The RDARS demonstrates superior performance when in proximity to the user.
However, at a distance of 200 meters, the MSE is lower than at 160 meters when there is one element in connection mode, i.e., $a=1$, due to the dominance of reflection gain over distribution and selection gains.  Conversely, when $a=4$, the distribution and selection gains become dominant, reversing this trend.
In Fig.~\ref{_Distance} (b), the scenario includes four users in the RDARS-aided uplink communication system.  The RDARS shows a consistent performance trend with $a=1$ as the horizontal distance varies, driven by the dominant reflection gain. As $a$ increases to 4, the RDARS performance is poorest near the BS.  Based on these observations, we conclude that better MSE performance is achieved when the RDARS is closer to the users, offering valuable guidelines for the practical deployment of RDARS-aided systems.  

    \begin{figure}[t]
        \centering  %图片全局居中
        \subfigure[Single user]{
        \includegraphics[width=0.45\textwidth]{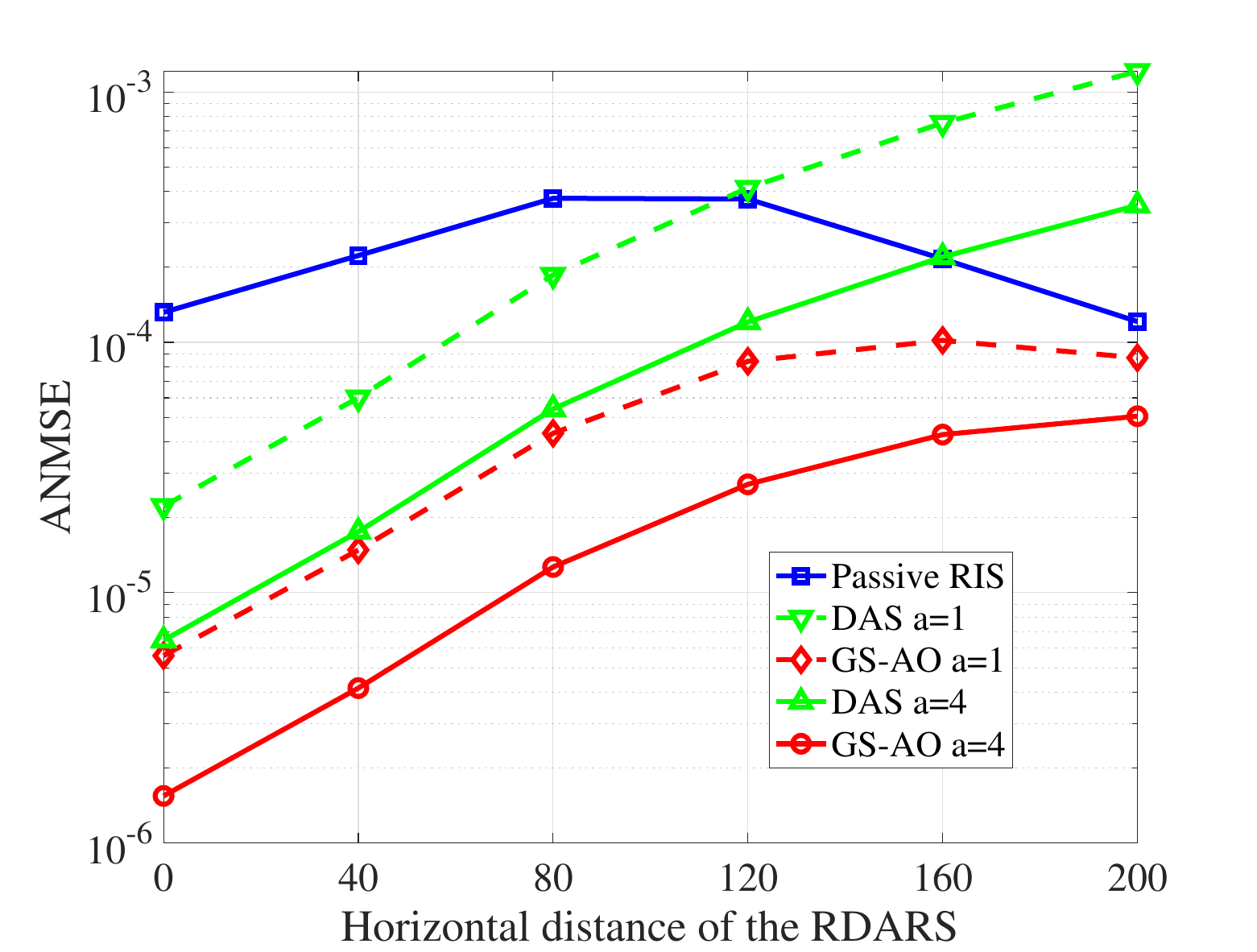}}
        \hspace{-6.5mm}
        \subfigure[Multiple user with $M=4$]{
        \includegraphics[width=0.45\textwidth]{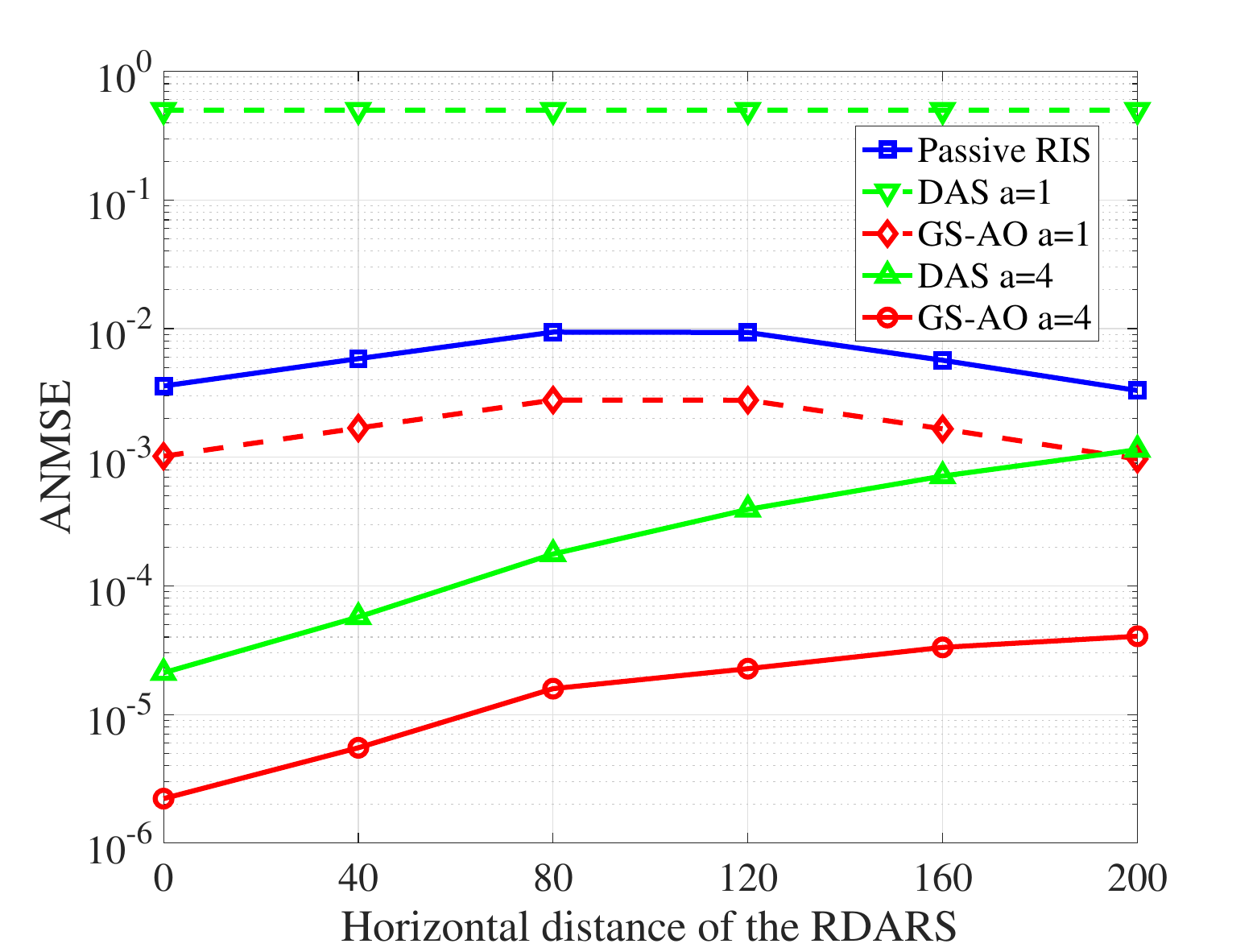}}
        \caption{ANMSE vs the horizontal distance of the RDARS in (a) Single user (b) Multiple users.}
        \label{_Distance}
    \end{figure}

}

\section{Conclusion}
In this paper, we investigated a novel RDARS-aided uplink MIMO communication system and explored the performance limits of this innovative structure.   
To improve the system transmission reliability, the sum MSE minimization problem of all data streams was formulated. 
The flexible and programmable configuration
of RDARS’s elements, where each element can be dynamically switched between the connection mode and the reflection mode, was also exploited to enhance the system performance. 
To address the intractable binary, unit-modulus, and cardinality constraints in the optimization problem, we proposed an inexact BCD-based PDD algorithm. Additionally, we presented a low-complexity greedy-search-based AO algorithm, which provides a near-optimal solution with low computational complexity.
Numerical results demonstrated the proposed architecture showed great superiority in comparison with the conventional RIS and DAS systems. 
We also concluded that the optimal number of elements in the connection mode on RDARS is closely related to the channel conditions, the number of users, and the number of BS antennas. Moreover, some insightful results about the practical implementation of RDARS were investigated. 

%\vspace{-12pt}
\appendix
\subsection{Proof of Proposition 2}
\begin{proof}
 In the single-user SISO case, the MSE minimization problem is equivalently transformed to 
\begin{align}
    \max_{{\bm{\theta}}} ||{{h}}_{d} \!+\! {\bf{g}}^H ({\bf{I}} \!-\! {\bf{A}}) {\rm diag}({\bm{\theta}}) {\bf{h}}_{r} ||^2  +{\bf{h}}_{r}^H {\bf{A}} {\bf{h}}_{r} 
\end{align}
The optimal ${\bm{\theta}}^{\star}$ in the SISO case is derived as:
\begin{align}
     {\bm{\theta}}_{n}^{\star} = \left\{
      \begin{aligned}
      & e^{j (\angle{ {{h}}_{d} }- \angle{{{h}}_{r,n}} + \angle{{{g}}_{n}}) },   \quad n \in \mathcal{\tilde N} ,  \\
     & {\rm any}~{{\theta}}_n~{\rm with}~|{{\theta}}_n|=1,   \quad {\rm otherwise},
      \end{aligned} 
      \right.  
\end{align}
where $\mathcal{\tilde N} \triangleq \{n| {\bf{A}}_{n,n}=0 \}$. For simplicity, the optimal ${\bm{\theta}}^{\star}$ can be selected as ${\bm{\theta}}^{\star}=e^{j (\angle{ {{h}}_{d} }- \angle{{\bf{h}}_{r}} + \angle{{\bf{g}}}) }$.
With the optimal ${\bm{\theta}}^{\star}$, the mode selection optimization turns into 
\vspace{-6pt}
\begin{align}
    \max_{{{x}_n},\forall n} \left(|{{h}}_{d}| \!+\! \sum\limits_{n}^{N}  |{{g}}_{n}| |{{h}}_{r,n}| \!-\! \sum\limits_{n}^{N} x_n |{{g}}_{n}| |{{h}}_{r,n}|  \right)^2 \!+\! \sum\limits_{n}^{N} x_n |{{h}}_{r,n}|^2
\end{align}
Since $|g_n|$ is much smaller than 1, ${x_n}|g_n||h_{r,n}|$ can be ignored.  The maximization problem turns into selecting the $a$ largest $|h_{r,n}|$.
Thus, we complete the proof.
\end{proof}

\vspace{-12pt}
\subsection{Proof of Proposition 3}
\begin{proof}    
Assume the LoS channels ${\bf{G}}=\kappa_g {\bf{a}}_{gr}  {\bf{a}}_{gt}^H$, then the problem turns into
\begin{align}\label{LoS}
    \max_{{\bm{\theta}},{\bf{A}}} \quad ||{\bf{h}}_{d} \!+\! \kappa_g {\bf{a}}_{gt}  {\bf{a}}_{gr}^H ({\bf{I}} \!-\! {\bf{A}}) {\rm diag}({\bm{\theta}}) {\bf{h}}_{r} ||^2  + {\bf{h}}_{r}^H {\bf{A}} {\bf{h}}_{r} 
\end{align}
Denoting ${\bf{\tilde a}}_r=\kappa_g{\bf{a}}_{gr}^H ({\bf{I}} \!-\! {\bf{A}}) {\rm diag}({\bf{h}}_{r})$, the problem can be equivalently turned into
\begin{align}
    \max_{{\bm{\theta}}} \quad  ||{\bf{h}}_{d} \!+\!  {\bf{a}}_{gt}  {\bf{\tilde a}}_r^H   {\bm{\theta}} ||^2  
\end{align}
With some mathematical transformation, it can be rewritten as
\begin{align}
    \max_{{\bm{\theta}}} \quad  {\bm{\theta}}^H {\bf{\tilde a}}_r {\bf{\tilde a}}_r^H   {\bm{\theta}} {\bf{a}}_{gt}^H {\bf{a}}_{gt}   + 2\Re \left\{ {\bf{h}}_{d}^H  {\bf{\tilde a}}_t  {\bf{\tilde a}}_r^H   {\bm{\theta}}  \right\}
\end{align}
The optimal ${\bm{\theta}}^{\star}$ should satisfy 
${\bm{\theta}}^{\star} = e^{j (\angle{{\bf{\tilde a}}_r} + \alpha )  }$ to maximize ${\bm{\theta}}^H {\bf{\tilde a}}_r {\bf{\tilde a}}_r^H   {\bm{\theta}}$ with unit-modulus constraints, where $\alpha$ can be arbitrarily chosen.
On the other hand, to maximize the second term $\Re \left\{ {\bf{h}}_{d}^H  {\bf{ a}}_t  {\bf{\tilde a}}_r^H   {\bm{\theta}}  \right\}$, it can be transformed to
\vspace{-6pt}
\begin{align}
    \max_{{\bm{\theta}}}  \Re \left\{  |{\bf{h}}_{d}^H  {\bf{ a}}_t| \sum\limits_{n}^{N}  |{{\tilde a}}_{r,n}| e^{j (\angle{{{\theta}}_{n}}-\angle{{{\tilde a}}_{r,n}} +\angle{{\bf{h}}_{d}^H  {\bf{ a}}_t} ) }  \right\}
\end{align}
the optimal ${\bm{\theta}}^{\star}$ can be derived as $ {\bm{\theta}}^{\star}=e^{j(\angle{\bf{{\tilde a}}_{r}} - \angle{{\bf{h}}_{d}^H  {\bf{ a}}_t})}$. Combining these two optimal conditions, the optimal ${\bm{\theta}}^{\star}$ for problem \eqref{LoS} can be determined as
$
    {\bm{\theta}}^{\star}=e^{j(\angle{\bf{{\tilde a}}_{r}} - \angle{{\bf{h}}_{d}^H  {\bf{ a}}_t})}.
$.
Based on the derived optimal ${\bm{\theta}}^{\star}$, the mode selection optimization problem under the LoS case is expressed as
\vspace{-6pt}
\begin{align}
    \max_{{{x}_n},\forall n} \quad &{\bf{ a}}_t^H {\bf{ a}}_t |{\bf{{\tilde a}}_{r}}|^2 \!+\! 2|{\bf{h}}_{d}^H  {\bf{ a}}_t|\sum\limits_{n}^{N}  |{{\tilde a}}_{r,n}| + \sum\limits_{n}^{N} x_n |{{h}}_{r,n}|^2 
\end{align}
where
\vspace{-12pt}
\begin{align}
        |{\bf{{\tilde a}}_{r}}|^2 &=  \kappa_g^2 \sum\limits_{n}^{N}  |{{ a}}_{r,n}|^2 |{{ h}}_{r,n}|^2 (1\!-\!x_n)^2, \nonumber \\        
         |{{\tilde a}}_{r,n}| &= \kappa_g |{{ a}}_{r,n}| |{{ h}}_{r,n}| (1\!-\!x_n),
\end{align}
Under the LoS channel ${{\bf h}}_{r}$, we have $|{{ h}}_{r,1}|=|{{ h}}_{r,n}|=\cdots=|{{ h}}_{r,N}|$. 
Thus, it can be readily referred that the mode selection matrix ${\bf{A}}$ can be arbitrarily chosen. Then, we complete the proof. 
\end{proof}

\bibliographystyle{IEEEtran}
\bibliography{RDARS_MSE_R1}

\vfill

\end{document}